\documentclass{llncs}
\usepackage{epsfig}
\usepackage{amsmath}
\usepackage{amssymb}
\usepackage{algorithmic}
\usepackage{bigstrut}
\usepackage{fullpage}


\def\idtt#1{\ensuremath{\mathtt{#1}}}


\def\rankop{\idtt{rank}}

\def\accessop{\idtt{access}}
\def\polylog{\idtt{polylog}}

\def\insertop{\idtt{insert}}
\def\deleteop{\idtt{delete}}

\def\rangecount{\idtt{range\_count}}

\begin{document}

\title{Dynamic Range Selection in Linear Space\thanks{This work was
    supported by NSERC and the Canada Research Chairs Program.}}

\author{Meng He \and J. Ian Munro \and Patrick K. Nicholson}

\institute{Faculty of Computer Science, Dalhousie University, Canada \and David R. Cheriton School of Computer Science, University of Waterloo, Canada, \\
 \email mhe@cs.dal.ca, \{imunro, p3nichol\}@uwaterloo.ca
}

\maketitle
\begin{abstract}
  Given a set $S$ of $n$ points in the plane, we consider the problem
  of answering range selection queries on $S$: that is, given an
  arbitrary $x$-range $Q$ and an integer $k > 0$, return the $k$-th
  smallest $y$-coordinate from the set of points that have
  $x$-coordinates in $Q$. We present a linear space data structure
  that maintains a dynamic set of $n$ points in the plane with real
  coordinates, and supports range selection queries in $O((\lg n / \lg
  \lg n)^2)$ time, as well as insertions and deletions in $O((\lg n /
  \lg \lg n)^2)$ amortized time.  The space usage of this data
  structure is an $\Theta(\lg n / \lg \lg n)$ factor improvement over
  the previous best result, while maintaining asymptotically matching
  query and update times.  We also present a succinct data structure
  that supports range selection queries on a dynamic array of $n$
  values drawn from a bounded universe.
\end{abstract}

\section{Introduction}

The problem of finding the \emph{median} value in a data set is a
staple problem in computer science, and is given a thorough treatment
in modern textbooks~\cite{CSRL01}.  In this paper we study a dynamic
data structure variant of this problem in which we are given a set $S$
of $n$ points in the plane.  The \emph{dynamic range median problem}
is to construct a data structure to represent $S$ such that we can
support \emph{range median queries}: that is, given an arbitrary range $Q
= [x_1,x_2]$, return the median $y$-coordinate from the set of points
that have $x$-coordinates in $Q$.  Furthermore, the data structure
must support insertions of points into, as well as deletions from, the
set $S$.  We may also generalize our data structure to support
\emph{range selection queries}: that is, given an arbitrary $x$-range $Q
= [x_1,x_2]$ and an integer $k > 0$, return the $k$-th smallest
$y$-coordinate from the set of points that have $x$-coordinates in
$Q$.

In addition to being a challenging theoretical problem, the range
median and selection problems have several practical applications in
the areas of image processing~\cite{GW93}, Internet advertising,
network traffic analysis, and measuring real-estate prices in a
region~\cite{HM08}.

In previous work, the data structures designed for the above problems
that support queries and updates in polylogarithmic time require
superlinear space~\cite{BGJS10}.  In this paper, we focus on designing
linear space dynamic range selection data structures, without
sacrificing query or update time.  We also consider the problem of
designing \emph{succinct data structures} that support range selection
queries on a dynamic array of values, drawn from a bounded universe:
here ``succinct'' means that the space occupied by our data structure
is close to the information-theoretic lower bound of representing the
array of values~\cite{J89}.

\subsection{Previous Work\label{sec:previousresults}}


\paragraph{Static Case:}

The static range median and selection problems have been studied by
many different authors in recent
years~\cite{BKMT05,KMS05,HM08,P08,PG09,GPT09,GS09,BJ09,BGJS10,JL11}.
In these problems we consider the $n$ points to be in an array: that
is, the points have $x$-coordinates $\{1, ..., n\}$.  We now summarize
the upper and lower bounds for the static problem.  In the remainder
of this paper we assume the word-RAM model of computation with word
size $w = \Omega(\lg n)$ bits.

For exact range medians in constant time, there have been several
iterations of near-quadratic space data
structures~\cite{KMS05,P08,PG09}. For linear space data structures,
Gfeller and Sanders~\cite{GS09} showed that range median queries could
be supported in $O(\lg n)$ time\footnote{In this paper we use $\lg n$
  to denote $\log_2 n$.}, and Gagie et al.~\cite{GPT09} showed that
selection queries could be supported in $O(\lg \sigma)$ time using a
wavelet tree, where $\sigma$ is the number of distinct $y$-coordinates
in the set of points.  Optimal upper bounds of $O(\lg n / \lg \lg n)$
time for range median queries have since been achieved by Brodal et
al.~\cite{BJ09,BGJS10}, and lower bounds by J{\o}rgensen and
Larsen~\cite{JL11}; the latter proved a cell-probe lower bound of
$\Omega(\lg n/ \lg \lg n)$ time for any static range selection data
structure using $O(n \lg^{O(1)} n)$ bits of space.  In the case of
range selection when $k$ is fixed for all queries, J{\o}rgensen and
Larsen proved a cell-probe lower bound of $\Omega(\lg k / \lg \lg n)$
time for any data structure using $O(n \lg^{O(1)}n)$
space~\cite{JL11}.  Furthermore, they presented an adaptive data
structure for range selection, where $k$ is given at query time, that
matches their lower bound, except when $k=2^{o(\lg^2\lg
  n)}$~\cite{JL11}.  Finally, Bose et al.~\cite{BKMT05} studied the
problem of finding \emph{approximate range medians}.  A
$c$-approximate median of range $[i..j]$ is a value of rank between
$\frac{1}{c} \times \lceil \frac{j - i + 1}{2} \rceil$ and $(2 -
\frac{1}{c}) \times \lceil \frac{j - i + 1}{2} \rceil$, for $c > 1$.

\paragraph{Dynamic Case:}

Gfeller and Sanders~\cite{GS09} presented an $O(n \lg n)$ space data
structure for the range median problem that supports queries in
$O(\lg^2 n)$ time and insertions and deletions in $O(\lg^2n)$
amortized time. Later, Brodal et al.~\cite{BJ09,BGJS10} presented an
$O(n\lg n / \lg \lg n)$ space data structure for the dynamic range
selection problem that answers range queries in $O((\lg n/\lg \lg
n)^2)$ time and insertion and deletions in $O((\lg n/\lg \lg n)^2)$
amortized time.  They also show a reduction from the \emph{marked
  ancestor problem}~\cite{AHR98} to the dynamic range median problem.
This reduction shows that $\Omega(\lg n/\lg \lg n)$ query time is
required for any data structure with polylogarithmic update time.
Thus, there is still a gap of $\Theta(\lg n / \lg \lg n)$ time between
the upper and lower bounds for linear and near linear space data
structures. 

In the restricted case where the input is a dynamic array $A$ of $n$
values drawn from a bounded universe, $[1,\sigma]$, it is possible to
answer range selection queries using a dynamic wavelet tree, such as
the succinct dynamic string data structure of He and
Munro~\cite{HM10}.  This data structure uses $n H_0(A) + o(n \lg
\sigma) + O(w)$ bits\footnote{$H_0(A)$ denotes the 0th-order empirical
  entropy of the multiset of values stored in $A$.  Note that $H_0(A)
  \le \lg \sigma$ always holds.} of space, the query time is
$O(\frac{\lg n \lg \sigma}{\lg \lg n})$, and the update time is
$O(\frac{\lg n}{\lg \lg n}(\frac{\lg \sigma}{\lg \lg n} + 1))$.

\subsection{Our Results\label{sec:ourresults}}

In Section~\ref{sec:lineards}, we present a \emph{linear space} data
structure for the dynamic range selection problem that answers queries
in $O((\lg n /\lg \lg n)^2)$ time, and performs updates in $O((\lg
n/\lg \lg n)^2)$ amortized time. This data structure can be used to
represent point sets in which the points have \emph{real coordinates}.
In other words, we only assume that the coordinates of the points can
be compared in constant time.  This improves the space usage of the
previous best data structure by a factor of $\Theta(\lg n / \lg \lg
n)$~\cite{BGJS10}, while maintaining query and update time.

In Section~\ref{sec:dynarray}, we present a succinct data structure
that supports range selection queries on a dynamic array $A$ of values
drawn from a bounded universe, $[1..\sigma]$.  The data structure
occupies $n H_0(A) + o(n \lg \sigma) + O(w)$ bits, and supports
queries in $O(\frac{\lg n}{\lg \lg n} (\frac{\lg \sigma}{\lg \lg
  \sigma }))$ time, and insertions and deletions in $O(\frac{\lg
  n}{\lg \lg n} (\frac{\lg \sigma}{\lg \lg \sigma}))$ amortized time.
This is a $\Theta(\lg \lg \sigma)$ improvement\footnote{The
  preliminary version of this paper that appeared in ISAAC 2011
  erroneously stated the bound for our succinct data structure as
  being an $\Theta(\lg \lg n)$ improvement.  We thank Gelin Zhou for
  pointing out this error.} in query time over the dynamic wavelet
tree, and thus closes the space gap between the dynamic wavelet tree
solution and that of Brodal et al.~\cite{BGJS10}.

\section{Linear Space Data Structure\label{sec:lineards}}

In this section we describe a linear space data structure for the
dynamic range selection problem.  Our data structure follows the same
general approach as the dynamic data structure of Brodal et
al.~\cite{BGJS10}.  However, we make several important changes, and
use several other auxiliary data structures, in order to improve the
space by a factor of $\Theta(\lg n / \lg \lg n)$.

The main data structure is a weight balanced B-tree~\cite{AV03}, $T$,
with branching parameter $\Theta(\lg^{\varepsilon_1}n)$, for $0 <
\varepsilon_1 < 1/2$, and leaf parameter $1$.  The tree $T$ stores the
points in $S$ at its leaves, sorted in non-decreasing order of
$y$-coordinate\footnote{Throughout this paper, whenever we order a
  list based on $y$-coordinate, it is assumed that we break ties using
  the $x$-coordinate, and vice versa.}.  The height of $T$ is $h_1 =
\Theta(\lg n /\lg \lg n)$ \emph{levels}, and we assign numbers to the
levels starting with level $1$ which contains the root node, down to
level $h_1$ which contains the leaves of $T$.  Inside each internal
node $v \in T$, we store the smallest and largest $y$-coordinates in
$T(v)$.  Using these values we can acquire the path from the root of
$T$ to the leaf representing an arbitrary point contained in $S$ in
$O(\lg n)$ time; a binary search over the values stored in the
children of an arbitrary internal node requires $O(\lg \lg n)$ time
per level.

Following Brodal et al.~\cite{BGJS10}, we store a \emph{ranking tree}
$R(v)$ inside each internal node $v \in T$.  The purpose of the
ranking tree $R(v)$ is to allow us to efficiently make a branching
decision in the main tree $T$, at node $v$.  Let $T(v)$ denote the
subtree rooted at node $v$.  The ranking tree $R(v)$ represents all of
the points stored in the leaves of $T(v)$, sorted in non-decreasing
order of $x$-coordinate.  The fundamental difference between our
ranking trees, and those of Brodal et al.~\cite{BGJS10}, is that ours
are more space efficient.  Specifically, in order to achieve linear
space, we must ensure that the ranking trees stored in each level of
$T$ occupy no more than $O(n \lg \lg n)$ bits in total, since there
are $O(\lg n / \lg \lg n)$ levels in $T$.  We describe the ranking
trees in detail in Section~\ref{sec:serank}, but first discuss some
auxiliary data structures we require in addition to $T$.

We construct a red-black tree $S_x$ that stores the points in $S$ at
its leaves, sorted in non-decreasing order of $x$-coordinate.  As
in~\cite{BGJS10}, we augment the red-black tree $S_x$ to store, in
each node $v$, the count of how many points are stored in $T(v_1)$ and
$T(v_2)$, where $v_1$ and $v_2$ are the two children of $v$.  Using
these counts, $S_x$ can be used to map any query $[x_1,x_2]$ into
$r_1$, the rank of the successor of $x_1$ in $S$, and $r_2$, the rank
of the predecessor of $x_2$ in $S$.  These ranking queries, as well as
insertions and deletions into $S_x$, take $O(\lg n)$ time.

\begin{figure}
\centering
\includegraphics[scale=1.2]{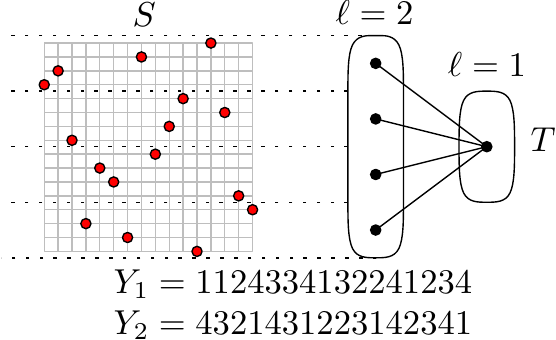}
\caption{\label{fig:stringfig}The top two levels of an example tree
  $T$, and the corresponding strings $Y_1$ and $Y_2$ for these
  levels. Each node at level $2$ has exactly $4$ children.}
\end{figure}

We also store a string $Y(v)$ for each node $v$ in $T$.  This string
consists of all of the of the points in $T(v)$ sorted in
non-decreasing order of $x$-coordinate, where each point is
represented by the index of the child of node $v$'s subtree in which
they are contained, i.e., an integer bounded by
$O(\lg^{\varepsilon_1}n)$.  However, for technical reasons, instead of
storing each string with each node $v \in T$, we concatenate all the
strings $Y(v)$ for each node $v$ at level $\ell$ in $T$ into a string
of length $n$, denoted by $Y_\ell$.  Each chunk of string $Y_\ell$
from left to right represents some node $v$ in level $\ell$ of $T$
from left to right within the level.  See Figure~\ref{fig:stringfig}
for an illustration.  We represent each string $Y_\ell$ using the
dynamic succinct data structure for representing strings of He and
Munro~\cite{HM11}.  Depending on the context, we refer to both the
string, and also the data structure that represents the string, as
$Y_\ell$.  Consider the following operations on the string $Y_\ell$:

\begin{itemize}
\item $\accessop(Y_\ell, i)$, which returns the $i$-th integer,
  $Y_\ell[i]$, in $Y_\ell$;
\item $\rankop_{\alpha}(Y_\ell, i)$, which returns the number of
  occurrences of integer $\alpha$ in $Y_\ell[1..i]$;
\item $\rangecount(Y_\ell, x_1, x_2, y_1, y_2)$, which returns the
  total number of entries in $Y_\ell[x_1..x_2]$ whose values are in
  the range $[y_1..y_2]$;
\item $\insertop_{\alpha}(Y_\ell, i)$, which inserts integer $\alpha$
  between $Y_\ell[i-1]$ and $Y_\ell[i]$;
\item $\deleteop(Y_\ell, i)$, which deletes $Y_\ell[i]$ from $Y_\ell$. 
\end{itemize}

Let $W = \lceil \frac{\lceil \lg n \rceil^2}{\lg \lceil \lg
  n\rceil}\rceil$.  The following lemma summarized the functionality
of these data structures for succinct \emph{dynamic strings} over
small universe:

\begin{lemma}[\cite{HM11}]
\label{lem:smallalphabet}
Under the word RAM model with word size $w = \Omega(\lg n)$, a string
$Y_\ell[1..n]$ of values from a bounded universe $[1..\sigma]$, where
$\sigma = O({\lg^{\mu} n})$ for any constant $\mu \in (0,1)$, can be
represented using $nH_0(Y_\ell) + O(\frac{n\lg\sigma\lg\lg
  n}{\sqrt{\lg n}}) + O(w)$ bits to support $\accessop$, $\rankop$,
$\rangecount$, $\insertop$ and $\deleteop$ in $O(\frac{\lg n}{\lg\lg
  n})$ time.  Furthermore, we can perform a batch of $m$ update
operations in $O(m)$ time on a substring $Y_\ell[i..i+m-1]$ in which
the $j$-th update operation changes the value of $Y_\ell[i+j - 1]$,
provided that $m > \frac{5W}{\lg \sigma}$.
\end{lemma}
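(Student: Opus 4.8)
\medskip
\noindent\textbf{Proof sketch.}
The plan is to reconstruct the dynamic string structure of He and Munro and check that, when $\sigma=O(\lg^{\mu}n)$, all of its costs collapse to $O(\lg n/\lg\lg n)$. I would first bring the alphabet down to size $\lceil\sqrt{\lg n}\rceil$ by a generalized wavelet tree of branching $\lceil\sqrt{\lg n}\rceil$: since $\sigma=O(\lg^{\mu}n)$ this tree has constant depth $\lceil 2\mu\rceil$, each of its nodes holds a sequence over an alphabet of size at most $\lceil\sqrt{\lg n}\rceil$, and the $0$-th order entropies of these sequences sum to $nH_0(Y_\ell)$. Each such sequence is then stored by a weight-balanced B-tree~\cite{AV03} $\mathcal{B}$ of branching parameter $\Theta(\lg^{\varepsilon}n)$ --- hence height $h=O(\lg n/\lg\lg n)$ --- whose leaves hold consecutive blocks of $\Theta(W)$ bits, each block kept in an enumerative encoding so that a block of $b$ characters with $b_\alpha$ copies of $\alpha$ takes $\lceil\lg\binom{b}{b_1,\dots,b_\sigma}\rceil+O(1)$ bits, matching the block's $0$-th order entropy. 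The $\Theta(n\lg\sigma/W)$ block boundaries are indexed by an auxiliary structure supporting prefix-sum, search and update in $O(\lg n/\lg\lg n)$ time; each internal node of $\mathcal{B}$ additionally stores, for each child, the per-symbol cumulative counts of that child's subtree; and universal (``Four Russians'') tables of $o(n)$ bits answer $\accessop$, $\rankop$ and $\rangecount$ inside any $o(\lg n)$-bit chunk of a block in $O(1)$ time. Choosing $W=\Theta(\lg^2 n/\lg\lg n)$ makes the enumerative rounding, the block pointers, the cumulative-count tables (here is where the base alphabet being only $\lceil\sqrt{\lg n}\rceil$ is used) and word-alignment add up to $O(w)+O(n\lg\sigma\lg\lg n/\sqrt{\lg n})$ bits of redundancy.

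\medskip
\noindent\emph{Queries.}
For $\accessop(Y_\ell,i)$ and $\rankop_{\alpha}(Y_\ell,i)$ I walk the $O(1)$ wavelet levels; at each level, using the block index I locate the block containing the current position in $O(\lg n/\lg\lg n)$ time, descend the $h$ levels of the corresponding $\mathcal{B}$ to it (reading one cumulative-count entry per node to accumulate the occurrences of $\alpha$ to the left, for $\rankop$), and then finish inside that block chunk-by-chunk with the tables, $O(W/w)=O(\lg n/\lg\lg n)$. For $\rangecount(Y_\ell,x_1,x_2,y_1,y_2)$ I note that at each wavelet level $[y_1..y_2]$ covers a contiguous range of children, so a range-count over the current (small) sub-alphabet reduces to $O(1)$ range-counts over the next, smaller sub-alphabet plus $O(1)$ position remappings (each a range-rank); once, after $O(1)$ levels, the sub-alphabet is down at $\lceil\sqrt{\lg n}\rceil$, a single $\mathcal{B}$ finishes the count by splitting $[x_1..x_2]$ into $O(h)$ maximal subtrees, reading each one's contribution from its cumulative-count arrays in $O(1)$, and scanning the two boundary blocks with the tables. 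All operations run in $O(\lg n/\lg\lg n)$.

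\medskip
\noindent\emph{Updates.}
A single $\insertop$/$\deleteop$ affects one block per wavelet level: re-encode the block, update the block index and the cumulative counts along the $O(h)$-length root path, and split/merge blocks under the lazy rebalancing of weight-balanced B-trees ($O(h)$ amortized). So that re-encoding is not $\Theta(W)$ each time, each block carries an $O(w)$-bit deferred-update buffer flushed once every $\Theta(W/w)$ touches, giving $O(h)=O(\lg n/\lg\lg n)$ amortized update time. The batch on $Y_\ell[i..i+m-1]$ translates, at each of the $O(1)$ wavelet levels, into delete/insert/modify operations confined to a contiguous stretch of $O(m)$ characters: I re-encode the $O(m\lg\sigma/W+1)$ blocks spanning that stretch, $O(W/w)$ word-operations each, for $O(m\lg\sigma/w+W/w)$ in total, and update $O(m\lg\sigma/W+1)$ block-index entries and $O((m\lg\sigma/W+1)h)$ cumulative-count words on the root paths. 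Since $m>5W/\lg\sigma$ forces $W/w=o(m)$ and $h=o(m)$, the batch runs in $O(m)$ time.

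\medskip
\noindent\emph{The hard part.}
The difficulty is meeting the space and the time targets simultaneously. Getting the redundancy down to $O(n\lg\sigma\lg\lg n/\sqrt{\lg n})$ is exactly what pins $W$ to $\Theta(\lg^2 n/\lg\lg n)$ --- smaller blocks cost too much in pointers, larger blocks make the in-block scans exceed the time budget --- and what forbids storing per-symbol cumulative counts directly for an alphabet of size $\lg^{\mu}n$; hence the preliminary wavelet tree, which brings the alphabet down to $\lceil\sqrt{\lg n}\rceil$ in $O(1)$ levels precisely because $\lg\sigma=O(\lg\lg n)$. The other delicate ingredient, inherited from the succinct dynamic rank/select literature, is the $O(\lg n/\lg\lg n)$-time block index supporting prefix-sum, search and update: a naive $\Theta(\lg^{\varepsilon}n)$-way navigation would cost $\Theta(\lg\lg n)$ per level and hence $\Theta(\lg n)$ in total, so one has to pack and word-parallelize the per-node counts even though at the top of $\mathcal{B}$ they are $\omega(w)$ bits wide. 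The last thing to verify is quantitative: that the threshold $m>5W/\lg\sigma$ is comfortably enough to guarantee $m=\omega(W/w)$ despite the constant-factor spread in $\mathcal{B}$'s block sizes --- the precise constant $5$ does not matter, only that it beats that spread.
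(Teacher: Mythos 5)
The paper does not prove Lemma~\ref{lem:smallalphabet}: it is quoted from~\cite{HM11} and treated as a black box (the text following the lemma only gives a one-sentence gloss --- ``roughly, a B-tree constructed over the string $Y_\ell[1..n]$ in which each leaf stores a superblock'' of at most $2W$ bits). So there is no in-paper proof to compare yours against; what you have done is reconstruct the He--Munro dynamic-string structure from first principles.

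Your reconstruction is plausible and captures the essential ingredients that the rest of the paper actually relies on: superblocks of $\Theta(W)$ bits at the leaves of a weight-balanced B-tree of height $O(\lg n/\lg\lg n)$; enumerative (zeroth-order) coding inside blocks to hit the $nH_0$ term; word-packed Four-Russians tables to scan a superblock in $O(\lg n/\lg\lg n)$ time; an $O(1)$-time-per-level searchable partial-sums structure for navigation (this is exactly what the paper's own Lemma~\ref{lem:RRR} supplies); and the setting $W=\Theta(\lg^2 n/\lg\lg n)$ so that pointers and counts amortize to $o(n\lg\sigma)$. Your space accounting for the cumulative-count arrays and your check that $m>5W/\lg\sigma$ makes the batch cost $O(m)$ both go through.

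The one genuinely different design choice is the constant-depth preliminary wavelet tree of branching $\lceil\sqrt{\lg n}\rceil$. This is needed in your scheme to keep per-child per-symbol counts affordable when $\mu\in[1/2,1)$, and it is a correct way to do it; but note that the paper's later Lemma~\ref{lem:leaf-query} explicitly assumes a \emph{single} linear sequence of superblocks for $Y_\ell$ whose pieces can be shared directly with the ranking trees. That assumption is consistent with your structure only because the paper always instantiates the lemma with $\sigma=O(\lg^{\varepsilon_1}n)$ and $\varepsilon_1<1/2$, in which case your wavelet tree degenerates to one level. You should state this explicitly: as written, your two-level version for $\mu\ge 1/2$ would break the ``shared leaves'' mechanism used downstream, and it is worth verifying (or at least noting) that~\cite{HM11} itself achieves the stated bound for $\mu$ up to $1$ without an extra wavelet layer that would duplicate the superblock sequence.

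One further point to tighten in the batch-update paragraph: the lemma's batch consists of \emph{in-place modifications} of $Y_\ell[i..i+m-1]$. At the root wavelet level these are $m$ contiguous modifies, but at a deeper level (if any) a modify that moves a symbol from one child to another becomes a delete in one child's projected sequence and an insert in another's; the positions touched are still contiguous per child, but the lengths of the child sequences change, so some superblocks must be split or merged and the partial-sums structures along $O(m\lg\sigma/W+1)$ root paths must all be adjusted. Your cost bound $O((m\lg\sigma/W+1)\cdot h)=O(m)$ does cover this, but the argument should say so rather than folding insert/delete into ``re-encode the blocks''.
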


The data structure summarized by the previous lemma is, roughly, a
B-tree constructed over the string $Y_\ell[1..n]$, in which each leaf
stores a \emph{superblock}, which is a substring of $Y_\ell[1..n]$ of
length at most $2W$ bits.  We mention this because the ranking tree
stored in each node of $T$ will implicitly reference to these
superblocks instead of storing leaves.  Thus, the leaves of the
dynamic string at level $\ell$ are \emph{shared} with the ranking
trees stored in nodes at level $\ell$.

As for their functionality, these dynamic string data structures
$Y_\ell$ are used to translate the ranks $r_1$ and $r_2$ into ranks
within a restricted subset of the points when we navigate a path from
the root of $T$ to a leaf.  The space occupied by these strings is
$O((n \lg (\lg^{\varepsilon_1} n) +w) \times \lg n / \lg \lg n)$ bits,
which is $O(n)$ words.  We present the following lemma:

\begin{lemma}
\label{lem:space-1}
Ignoring the ranking trees stored in each node of $T$, the data
structures described in this section occupy $O(n)$ words.
\end{lemma}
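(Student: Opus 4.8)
The plan is to account, one by one, for each data structure introduced in this section and show that each contributes $O(n)$ words, so that their sum is also $O(n)$ words. The relevant objects are: (i) the weight balanced B-tree $T$ itself, together with the auxiliary values (smallest and largest $y$-coordinates) stored in each internal node; (ii) the red-black tree $S_x$ with its augmented subtree counts; and (iii) the collection of dynamic string data structures $Y_\ell$, one per level $\ell$ of $T$. We explicitly exclude the ranking trees $R(v)$, which are deferred to Section~\ref{sec:serank}.

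First I would bound $T$: since $T$ is a weight balanced B-tree with leaf parameter $1$ storing $n$ points, it has $O(n)$ leaves and, with branching parameter $\Theta(\lg^{\varepsilon_1} n) = \omega(1)$, the number of internal nodes is also $O(n)$ (in fact $o(n)$). Each node stores $O(1)$ words of navigation data (pointers, plus the two extremal $y$-coordinates), so $T$ occupies $O(n)$ words. The same counting argument applies to $S_x$: a red-black tree over $n$ points has $O(n)$ nodes, each augmented with two integer counts and a constant number of pointers, hence $O(n)$ words total.

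The main work is bounding the $Y_\ell$ structures. For a fixed level $\ell$, the string $Y_\ell$ has length $n$ over an alphabet of size $\sigma = \Theta(\lg^{\varepsilon_1} n)$, so by Lemma~\ref{lem:smallalphabet} (applicable since $\varepsilon_1 \in (0,1/2) \subset (0,1)$) it occupies
\[
  n H_0(Y_\ell) + O\!\left(\frac{n \lg \sigma \lg \lg n}{\sqrt{\lg n}}\right) + O(w)
  \;=\; O\!\left(n \lg \sigma\right) + O(w)
  \;=\; O(n \lg \lg n) + O(w)
\]
bits, using $H_0(Y_\ell) \le \lg \sigma = O(\lg \lg n)$ and $w = \Omega(\lg n)$. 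Summing over the $h_1 = \Theta(\lg n / \lg \lg n)$ levels gives $O(n \lg \lg n \cdot \lg n / \lg \lg n) + O(w \lg n / \lg \lg n) = O(n \lg n) + o(w \lg n) = O(n \lg n)$ bits, which is $O(n)$ words. (The $O(w)$ additive terms are absorbed since $w = O(n)$, assuming $n$ is at least a constant; more carefully, $\Theta(\lg n/\lg\lg n) \cdot O(w) = O(n)$ words trivially as there are at most $\lg n$ levels.) This matches the bound $O((n \lg (\lg^{\varepsilon_1} n) + w) \cdot \lg n / \lg \lg n)$ quoted in the text just before the lemma.

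The step I expect to require the most care is making sure the per-level space bound is applied with the correct alphabet size and that the $O(w)$ and lower-order error terms genuinely vanish after summing over all levels — in particular, confirming that $\frac{n \lg \sigma \lg \lg n}{\sqrt{\lg n}}$ summed over $\Theta(\lg n/\lg\lg n)$ levels is still $o(n \lg n)$ bits, i.e. $o(n)$ words, so it does not dominate. Everything else is a routine node-counting argument. Adding the three contributions — $O(n)$ words for $T$, $O(n)$ words for $S_x$, and $O(n)$ words for the $Y_\ell$'s — yields the claimed $O(n)$-word total, completing the proof.
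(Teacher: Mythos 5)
Your proof is correct and follows essentially the same route as the paper: the paper states Lemma~\ref{lem:space-1} without a formal proof block, but the sentence immediately preceding it gives exactly your calculation for the $Y_\ell$ strings ($O((n\lg(\lg^{\varepsilon_1}n) + w)\times \lg n/\lg\lg n)$ bits $= O(n)$ words), with $T$ and $S_x$ being routine $O(n)$-node trees. Your version simply spells out the per-level application of Lemma~\ref{lem:smallalphabet}, the bound $H_0(Y_\ell)\le\lg\sigma$, and the vanishing of the $O(w)$ and lower-order terms, all of which the paper leaves implicit.
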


In the next section we discuss the technical details of our
space-efficient ranking tree.  The key idea to avoid using linear
space per ranking tree is to \emph{not} actually store the points in
the leaves of the ranking tree, sorted in non-decreasing order of
$x$-coordinate.  Instead, for each point $p$ in ranking tree $R(v)$,
we implicitly reference the the string $Y(v)$, which stores the index
of the child of $v$ that contains $p$.

\subsection{Space Efficient Ranking Trees\label{sec:serank}}

Each ranking tree $R(v)$ is a weight balanced B-tree with branching
parameter $\lg^{\varepsilon_2}n$, where $0 < \varepsilon_2 < 1 -
\varepsilon_1$, and leaf parameter $\Theta(W/\lg\lceil \lg n\rceil) =
\Theta((\lg n / \lg \lg n)^2)$.  Thus, $R(v)$ has height $\Theta(\lg n
/ \lg \lg n)$, and each leaf implicitly represents a substring of
$Y(v)$, which is actually stored in one of the dynamic strings,
$Y_\ell$.

\paragraph{Internal Nodes:}
Inside each internal node $u$ in $R(v)$, let $q_i$ denote the number
of points stored in the subtree rooted at the $i$-th child of $u$, for
$1 \le i \le f_2$, where $f_2$ is the degree of $u$.  We store a
\emph{searchable partial sums structure}~\cite{RRR02} for the sequence
$Q = \{q_1, ..., q_{f_2} \}$.  This data structure will allow us to
efficiently navigate from the root of $R(v)$ to the leaf containing
the point of $x$-coordinate rank $r$.  The functionality of this data
structure is summarized in the following lemma:

\begin{lemma}[\cite{RRR02}]\label{lem:RRR}
Suppose the word size is $\Omega(\lg n)$.  A sequence $Q$ of
$O(\lg^{\mu} n)$ nonnegative integers of $O(\lg n)$ bits each, for any
constant $\mu \in (0,1)$, can be represented in $O(\lg^{1+\mu}n)$ bits
and support the following operations in $O(1)$ time:
\begin{itemize}
\item $\idtt{sum}(Q,i)$ which returns $\sum_{j=1}^{i}Q[j]$,
\item $\idtt{search}(Q,x)$ which returns the smallest $i$ such that
  $\idtt{sum}(Q,i) \ge x$,
\item $\idtt{modify}(Q,i,\delta)$ which sets $Q[i]$ to
  $Q[i]+\delta$, where $|\delta| \le \lg n$.
\end{itemize}
This data structure can be constructed in $O(\lg^{\mu}n)$ time, and it
requires a precomputed universal table of size $O(n^{\mu'})$ bits for
any fixed $\mu' > 0$.
\end{lemma}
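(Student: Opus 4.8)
This is the \emph{searchable partial sums} primitive of Raman, Raman and Rao~\cite{RRR02}, so the plan is to recall that construction and check it meets the stated bounds for our regime: $N:=|Q|=O(\lg^{\mu}n)$ nonnegative integers of $b:=O(\lg n)$ bits each. We may assume $w=\Theta(\lg n)$, since a larger word only helps. First pad every entry to $b':=b+\lceil\lg N\rceil+1=O(\lg n)$ bits, so that every prefix sum $\sum_{j\le i}Q[j]\le N\,2^{b}$ fits in one $b'$-bit field without overflow; this monotone, overflow-free field layout is exactly what makes $\idtt{search}$ correct, and stored packed it occupies $Nb'=O(\lg^{1+\mu}n)$ bits, the claimed space. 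Two word-RAM tools act on such packed blocks in $O(1)$ time: (i) for a run of $m\le\lfloor w/b'\rfloor$ consecutive fields lying in one word, all $m$ partial sums are produced by one multiplication with the fixed constant that has a $1$ in the low bit of each field (no carry crosses a field because $b'>b+\lg m$), after which shifts, masks and a most-significant-set-bit query evaluate the local parts of $\idtt{sum}$, $\idtt{search}$ and $\idtt{modify}$; (ii) a precomputed \emph{universal} table $\mathcal{T}$, indexed by an $L$-bit block with $L:=\lfloor\mu'\lg n\rfloor$ (parsed as $\lfloor L/b'\rfloor=O(1)$ padded fields), a constant-size opcode, and an $O(\lg n)$-bit argument, returning the result of the operation restricted to that block; then $|\mathcal{T}|=2^{L}\cdot\mathrm{poly}(\lg n)=O(n^{\mu'})$ and $\mathcal{T}$ is built once in $O(n^{\mu'})$ time.

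Over the $N$ entries I would lay a search tree of branching factor $f:=\lceil\lg^{\delta}n\rceil$ for a fixed $\delta\in(0,1)$; its depth is $\log_f N=\lceil\mu/\delta\rceil=O(1)$. Each internal node records the packed sums of its $\le f$ subtrees together with their within-node prefix sums; whenever that exceeds one word (or the table width $L$) it is cut into $O(1)$ sub-blocks and summarised by one digest per sub-block, all read through the multiplication trick and through $\mathcal{T}$. A $\idtt{sum}$ or $\idtt{search}$ then descends the $O(1)$ levels, spending $O(1)$ per level to select the correct child and accumulate the prefix; a $\idtt{modify}(Q,i,\delta)$, since $|\delta|\le\lg n$ occupies one field, adds $\delta$ into the $O(1)$ ancestors of the $i$-th leaf and refreshes each ancestor's $O(1)$ digests, so the update is worst-case $O(1)$ with no amortisation. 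Rebuilding all of this for a freshly given sequence --- packing it, forming the $O(1)$ levels, filling in digests --- is $O(N)=O(\lg^{\mu}n)$ work, while the universal $\mathcal{T}$ is charged only once.

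The delicate step, and the one I expect to be the main obstacle, is keeping each node's representation simultaneously wide enough to hold its up to $\lg^{\delta}n$ subtree-sums of $\Theta(\lg n)$ bits and narrow enough that every local probe is $O(1)$: this is precisely what forces the two-tier ``sub-block $+$ digest'' encoding inside a node, together with a joint tuning of the padded width $b'$, of the table width $L=\Theta(\lg n)$, and of the branching factor $f=\lg^{\delta}n$ so that one tier of routing data lives in $O(1)$ words while the complementary exact data is reachable by $O(1)$ look-ups in $\mathcal{T}$. Once that balance is set up, the remaining claims --- correctness of $\idtt{search}$ from monotone non-overflowing fields, the $O(\lg^{\mu}n)$ construction time, and the $O(n^{\mu'})$ universal table --- follow from routine bookkeeping.
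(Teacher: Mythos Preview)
The paper does not give its own proof of this lemma: it is quoted verbatim as a black-box result from Raman, Raman and Rao~\cite{RRR02}, with no argument supplied beyond the citation. So there is nothing to compare your attempt against on the paper's side.

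That said, your reconstruction is a faithful sketch of the cited construction: pack $O(\lg^{\mu}n)$ entries of $O(\lg n)$ bits into $O(\lg^{1+\mu}n)$ bits, build a search tree of branching factor $\lg^{\delta}n$ and hence constant depth, and resolve each node in $O(1)$ time via word-level parallelism plus a sublinear universal table. The one place your write-up is looser than the original is the ``sub-block $+$ digest'' encoding inside a node --- you correctly flag this as the delicate point, but you stop short of actually specifying how the $\Theta(\lg^{\delta}n)$ subtree sums of $\Theta(\lg n)$ bits each (which together exceed one word) are split so that both the table index and the word-parallel probe stay $O(1)$. In~\cite{RRR02} this is handled by storing, at each node, the high-order bits of the prefix sums packed into a single word (so that a parallel comparison narrows the search to $O(1)$ candidates) and then resolving among those candidates with direct access; your sketch gestures at this but does not pin it down. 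For the purposes of this paper that is fine, since the lemma is used only as a citation.
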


We also store the \emph{matrix structure} of Brodal et
al.~\cite{BGJS10} in each internal each node $u$ of the ranking tree.
Let $f_1 = \Theta(\lg^{\varepsilon_1}n)$ denote the out-degree of node
$v \in T$, and let $T(v_1), ..., T(v_{f_1})$ denote the subtrees
rooted at the children of $v$ from left to right.  Similarly, recall
that $f_2 = \Theta(\lg^{\varepsilon_2}n)$ denotes the out-degree of $u
\in R(v)$, and let $T'(u_1), ..., T'(u_{f_2})$ be the subtrees rooted
at each child of $u$ from left to right.  These matrix structures are
a kind of partial sums data structure defined as follows; we use
roughly the same notation as~\cite{BGJS10}:

\begin{definition}[Summarizes~\cite{BGJS10}]
  A \emph{matrix structure} $M^u$ is an $f_1 \times f_2$~matrix, where
  entry $M^u_{p,q}$ stores the number of points from $\cup_{i=1}^{q}
  T'(u_i)$ that are contained in $\cup_{i = 1}^{p} T(v_i)$.  The
  matrix structure $M^u$ is stored in two ways.  The first
  representation is a standard table, where each entry is stored in
  $O(\lg n)$ bits.  In the second representation, we divide each
  column into \emph{sections} of $\Theta(\lg^{1-\varepsilon_1}n)$ bits
  --- leaving $\Theta(\lg \lg n)$ bits of overlap between the sections
  for technical reasons --- and we number the sections $s_1, ...,
  s_g$, where $g = \Theta(\lg^{\varepsilon_1} n)$.  In the second
  representation, for each column $q$, there is a \emph{packed word}
  $w^u_{q,i}$, storing section $s_i$ of each entry in column $q$.
  Again, for technical reasons, the most significant bit of each
  section stored in the packed word $w^u_{q,i}$ is padded with a zero
  bit.
\end{definition}

We defer the description of how the matrix structures are used to
guide queries until Section~\ref{sec:answering}.  For now, we just
treat these structures as a black box and summarize their properties
with the following lemma:

\begin{lemma}[\cite{BGJS10}]\label{lem:matrix-struct}
  The matrix structure $M^u$ for node $u$ in the ranking tree $R(v)$
  occupies $O(\lg^{1+\varepsilon_1+\varepsilon_2} n)$ bits, and can be
  constructed in $O(\lg^{\varepsilon_1+\varepsilon_2}n)$ time.
  Furthermore, consider an update path that goes through node $u$ when
  we insert a value into or delete a value from $R(v)$.  The matrix
  structures in each node along an update path can be updated in
  $O(1)$ amortized time per node.
\end{lemma}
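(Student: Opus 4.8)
The plan is to verify the three claims in turn, treating $M^u$ essentially as the structure of \cite{BGJS10} and checking that every cost stays within our tighter budget under the new parameters $f_1=\Theta(\lg^{\varepsilon_1}n)$, $f_2=\Theta(\lg^{\varepsilon_2}n)$ and leaf parameter $\Theta((\lg n/\lg\lg n)^2)$. For the space bound I would simply count. The first representation is an $f_1\times f_2$ table whose entries are counts bounded by $n$, hence $O(\lg n)$ bits each, for a total of $O(f_1 f_2\lg n)=O(\lg^{1+\varepsilon_1+\varepsilon_2}n)$ bits. In the second representation each column is cut into $g=\Theta(\lg^{\varepsilon_1}n)$ sections of $\Theta(\lg^{1-\varepsilon_1}n)$ bits (plus $\Theta(\lg\lg n)$ overlap bits), so each packed word $w^u_{q,i}$ holds $f_1$ such fields, i.e.\ $\Theta(\lg n)$ bits $=O(1)$ machine words; there are $f_2 g$ packed words, again $O(f_2 g\lg n)=O(\lg^{1+\varepsilon_1+\varepsilon_2}n)$ bits. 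The sum of both representations is within the claimed bound, and since $\varepsilon_1+\varepsilon_2<1$ this is $o(\lg n)$ words, so over the $\Theta((\lg n/\lg\lg n)^2)$ leaf-parameter-worth of points per ranking-tree leaf the matrices contribute only lower-order space.

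For the construction time I would build $M^u$ bottom up. For each child $u_q$ of $u$ I form the histogram vector $h_q[1..f_1]$ recording how many points of $T'(u_q)$ lie in each child $T(v_p)$ of $v$. When $u_q$ is internal this is read off by row-differencing the last column of the already-built $M^{u_q}$ in $O(f_1)$ time; when $u_q$ is a leaf it corresponds to a contiguous superblock of the dynamic string $Y(v)$, and since each symbol of $Y(v)$ uses only $O(\lg\lg n)$ bits, the histogram can be tabulated by scanning the superblock $\Theta(\lg n/\lg\lg n)$ symbols at a time with a precomputed universal table of size $O(n^{\mu'})$ bits. Given the $f_2$ histograms, both representations of $M^u$ follow from a two-dimensional prefix-sum pass in $O(f_1 f_2)$ time, which is $O(\lg^{\varepsilon_1+\varepsilon_2}n)$. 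The one node type whose construction exceeds this is a parent of leaves, where the $f_2$ superblock scans cost $O(f_2\lg n/\lg\lg n)$; but weight balance rebuilds such a node only once per $\Omega(f_2(\lg n/\lg\lg n)^2)$ updates, so this surplus is $O(1)$ amortized and gets folded into the update analysis.

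For the update bound, note that an insertion or deletion of a point $p$ descends a root-to-leaf path in $R(v)$; at a node $u$ on this path it enters some child $u_c$, and $p$ belongs to some child $v_{p^\ast}$ of $v$, so the entries of $M^u$ that change by $\pm1$ are precisely the corner $\{\text{rows}\ge p^\ast\}\times\{\text{columns}\ge c\}$. The trick, following \cite{BGJS10}, is to keep the packed representation in column-differenced form so that $p$ alters only column $c$: adding $\pm1$ to the low-order section of rows $p^\ast,\dots,f_1$ is a single masked addition to one packed word, i.e.\ $O(1)$. A carry out of the low section is absorbed by the $\Theta(\lg\lg n)$ overlap bits and resolved only rarely; a full resolution of a column costs $O(g)$ but occurs at most once per a super-polylogarithmic number of updates through $u$, hence $O(1)$ amortized. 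Finally, a weight-balanced rebalancing rebuilds a subtree of $R(v)$ of weight $W'$ at cost $O(W'\lg^{\varepsilon_1+\varepsilon_2}n/(\lg n/\lg\lg n)^2 + W')=O(W')$ (all matrices plus one scan of the shared superblocks) and, by the amortization of weight balanced B-trees \cite{AV03}, happens once per $\Omega(W')$ updates, contributing $O(1)$ amortized per node. Combining these sources gives $O(1)$ amortized per node along an update path. The main obstacle I anticipate is exactly this update analysis: making the corner update $O(1)$-amortized depends on the interplay of the differenced packed representation, the padded overlapping sections that tame carries, and the weight-balanced rebuild amortization, and on confirming that the plain-table representation need only be reconciled during those periodic rebuilds so that the query procedure of Section~\ref{sec:answering} can still read consistent values — the rest is routine verification with our parameters.
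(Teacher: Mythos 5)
Note first that the paper does not prove this lemma: it is stated with the citation \cite{BGJS10} and explicitly treated ``as a black box,'' so there is no in-paper argument to compare yours against; your reconstruction has to be judged on whether it is internally consistent with the way the surrounding text consumes the lemma. On its own terms, your space count is correct and routine ($O(f_1 f_2\lg n)$ bits for the standard table, $O(f_2 g)$ packed words of $O(\lg n)$ bits each for the sectioned representation), and the bottom-up construction via per-child histograms and a 2D prefix-sum pass is plausible, though the surplus superblock-scan cost at the parents of conceptual leaves belongs in the $O(m)$-time construction of $R(v)$ (Lemma~\ref{lem:space-2}) rather than being ``folded into the update analysis.''

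The genuine gap is the one you flag yourself at the very end. Your $O(1)$-amortized update plan touches only a column-differenced packed word and defers reconciliation of the plain-table representation to periodic rebuilds --- but Section~\ref{sec:answering} reads the plain table directly during queries: the entries $M^u_{i,f_2}-M^u_{i,q}$ are summed over $O(\lg n/\lg\lg n)$ canonical nodes to produce $M'_i$ for the checks of $e_1,e_b$, for the constant number of good-case candidates, and for the $O(\lg\lg n)$ probes of the bad-case binary search. An insertion through child $u_c$ of $u$ with $p\in T(v_{p^\ast})$ increments the entire corner $\{p\ge p^\ast\}\times\{q\ge c\}$ --- up to $\Theta(f_1 f_2)$ entries --- so keeping the plain table exact is not an $O(1)$ worst-case operation, and you cannot simply leave it stale or the query would return an incorrect $\tau$. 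You would need a concrete lazy mechanism here: a bounded buffer of pending updates applied on read in $O(1)$, a representation with a constant-time 2D-prefix-sum inversion, or an argument that staleness within a rebuild window is tolerated by the $\Theta(\lg\lg n)$-bit section overlap together with the $O(\lg\lg n)$ low-order slack that the query already allows for carry errors. Your proposal names none of these, and that is precisely the step that must be pulled from \cite{BGJS10} itself; the present paper does not reprove it.
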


\paragraph{Shared Leaves:}

Now that we have described the internal nodes of the ranking tree, we
describe the how the leaves are shared between $R(v)$ and the dynamic
string over $Y_\ell$.  To be absolutely explicit, we do not actually
store the leaves of $R(v)$: they are only conceptual. We present the
following lemma:

\begin{lemma}
\label{lem:leaf-query}
Let $u$ be a leaf in $R(v)$ and $S$ be the substring of $Y(v)$ that
$u$ represents, where each value in $S$ is in the range $[1..\sigma]$,
and $\sigma = \Theta(\lg^{\varepsilon_1}n)$.  Using a universal table
of size $O(\sqrt{n}\times \polylog(n))$ bits, for any $z \in
[1..|S|]$, an array $C_z = \{c_1, ..., c_\sigma \}$ can be computed in
$O(\lg n / \lg \lg n)$ time, where $c_i = \rankop_i(S,z)$, for $1 \le
i \le \sigma$.
\end{lemma}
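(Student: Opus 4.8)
The plan is to compute the entire array $C_z$ by a single table lookup (or a constant number of lookups) on a packed representation of the prefix $S[1..z]$, rather than by $\sigma$ separate rank queries. First I would recall that a leaf of $R(v)$ represents a substring of $Y(v)$ of length $\Theta(W/\lg\lceil\lg n\rceil)$ symbols, where each symbol is an integer in $[1..\sigma]$ with $\sigma=\Theta(\lg^{\varepsilon_1}n)$; hence the whole leaf occupies $\Theta(W)=\Theta(\lg^2 n/\lg\lg n)$ bits, which is $\Theta((\lg n/\lg\lg n))$ machine words. The key observation is that $\varepsilon_1<1/2$, so a string of $\Theta(\sqrt{\lg n})$ symbols fits in $\Theta(\sqrt{\lg n}\cdot\lg\lg n)=O(\lg n)$ bits, i.e.\ one word, and there are only $\sigma^{\Theta(\sqrt{\lg n})}=2^{O(\sqrt{\lg n}\,\lg\lg n)}=n^{o(1)}$ such strings. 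This is what makes a universal table of size $O(\sqrt n\cdot\polylog n)$ bits feasible.

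The main construction proceeds in three steps. (i) Split the prefix $S[1..z]$ into $O(\lg n/\lg\lg n \,/\, \sqrt{\lg n}) = O(\sqrt{\lg n}/\lg\lg n)$ \emph{chunks}, each consisting of $\Theta(\sqrt{\lg n})$ symbols (so each chunk is a single word in the packed encoding used by the dynamic string data structure of Lemma~\ref{lem:smallalphabet}); a possible last partial chunk is handled the same way. (ii) Precompute a universal table that, indexed by (the packed encoding of) a chunk, returns the $\sigma$-entry histogram of that chunk — i.e.\ for each $\alpha\in[1..\sigma]$ the count of occurrences of $\alpha$ in the chunk — stored as a packed array of $\sigma$ fields of $O(\lg\lg n)$ bits each (each field needs only $O(\lg\sqrt{\lg n})=O(\lg\lg n)$ bits since a chunk has $O(\sqrt{\lg n})$ symbols), which again fits in $O(\sqrt{\lg n}\,\lg\lg n)=O(\lg n)$ bits, i.e.\ one word. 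The table has $n^{o(1)}$ rows each of $O(\lg n)$ bits, so $O(\sqrt n\cdot\polylog n)$ bits suffices (in fact far less). (iii) Sum the per-chunk histograms: reading the $O(\sqrt{\lg n}/\lg\lg n)$ chunk-words, look up each histogram word in the table and add them together field-wise using a constant number of word operations per word (the $O(\lg\lg n)$-bit fields never overflow since the total count is at most $|S|=O(\lg^2 n)$, which fits in $O(\lg\lg n)$ bits). The accumulated packed word(s) then hold exactly $c_\alpha=\rankop_\alpha(S,z)$ for every $\alpha$, from which the array $C_z$ is obtained by unpacking. The total time is $O(\sqrt{\lg n}/\lg\lg n) = O(\lg n/\lg\lg n)$, as claimed; unpacking the $\sigma=O(\lg^{\varepsilon_1}n)$ fields also costs $O(\sigma)=O(\lg n/\lg\lg n)$ time (or can be left packed).

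The main obstacle is the bookkeeping needed to make the chunk boundaries line up with the physical word boundaries of the superblock-based encoding in Lemma~\ref{lem:smallalphabet}: the substring $S$ that a leaf of $R(v)$ represents is a contiguous range inside $Y_\ell$ that need not begin at a superblock boundary, so extracting the packed chunks requires a few shift-and-mask operations to realign, and the first and last chunks may be partial. A second, minor subtlety is verifying that $\varepsilon_1<1/2$ (rather than merely $\varepsilon_1<1$) is precisely what guarantees $\sigma^{\Theta(\sqrt{\lg n})}=n^{o(1)}$ and hence that the universal table stays within the stated $O(\sqrt n\cdot\polylog n)$-bit budget; with $\varepsilon_1$ close to $1$ a chunk would have to be shorter, but the argument still goes through with chunks of $\Theta(\sqrt{\lg n})$ symbols and field width $\Theta(\lg\lg n)$. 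Once these alignment details are discharged, the counting and summation are routine word-RAM packed-arithmetic manipulations.
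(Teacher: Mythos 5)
Your proposal follows the same strategy as the paper — accumulate per-chunk histograms via a universal-table lookup and word-level packed addition — but the chunk size you chose is too small, and the time analysis contains a counting error that masks this. You take each chunk to be $\Theta(\sqrt{\lg n})$ symbols, i.e.\ $\Theta(\sqrt{\lg n}\,\lg\lg n)=o(\lg n)$ bits. However, $S$ may contain up to $\Theta(W/\lg\lg n)=\Theta((\lg n/\lg\lg n)^2)$ \emph{symbols}, whereas your chunk count $O\bigl((\lg n/\lg\lg n)/\sqrt{\lg n}\bigr)$ implicitly treats $|S|$ as $\Theta(\lg n/\lg\lg n)$ symbols; you appear to have conflated the number of machine \emph{words} occupied by $S$ (which is indeed $\Theta(\lg n/\lg\lg n)$) with its number of symbols. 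With $\Theta(\sqrt{\lg n})$-symbol chunks, the true chunk count is $\Theta\bigl(\lg^{3/2} n/(\lg\lg n)^2\bigr)$, so your procedure runs in $\omega(\lg n/\lg\lg n)$ time and does not establish the lemma as stated. A further small point: your claimed reliance on $\varepsilon_1<1/2$ is not the operative constraint — for any constant $\varepsilon_1<1$, a $\Theta(\sqrt{\lg n})$-symbol string has bit-length $o(\lg n)$ and there are $n^{o(1)}$ of them — and it also does not justify the smaller chunk size.

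The fix, which is exactly what the paper's proof does, is to make each chunk $(\lg n)/2$ \emph{bits} long, i.e.\ $\Theta(\lg n/\lg\lg n)$ symbols (each symbol occupying $\varepsilon_1\lg\lg n$ bits). The universal table then has $2^{(\lg n)/2}=\sqrt n$ rows, each storing a packed histogram of $\sigma\cdot O(\lg\lg n)=O(\lg^{\varepsilon_1} n\,\lg\lg n)=o(\lg n)$ bits, which keeps the table within the stated $O(\sqrt n\cdot\polylog(n))$-bit budget; and since $S$ occupies $\Theta(\lg^2 n/\lg\lg n)$ bits, the number of lookups is $\Theta\bigl((\lg^2 n/\lg\lg n)/(\lg n)\bigr)=\Theta(\lg n/\lg\lg n)$, giving the claimed running time. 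The remaining steps you outline — locating the superblock(s) holding $S$ via the partial-sums path, the shift-and-mask realignment of the first and last partial chunks, and the carry-free packed addition of $O(\lg\lg n)$-bit fields — are all consistent with the paper's argument.
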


\begin{proof}
Instead of explicitly storing the leaves of $R(v)$, we use the partial
sums structures along the path from the root of $R(v)$ to the parent
of leaf $u$ to produce two ranks, $r'_1$ and $r'_2$, which are the
starting and ending ranks of the substring $S$ represented by $u$ in
$Y_\ell$.  Based on the leaf parameter of $R(v)$, and the properties
of weight balanced B-trees~\cite{AV03}, $S$ can have length
$\Theta(W/\lg\lg n)$.  Following the analysis of how superblocks are
laid out in the dynamic string $Y_\ell$, this means that $S$ is stored
in a constant number of consecutive superblocks~\cite[Section
  4]{HM10a}.

Given that we know $r'_1$ and $r'_2$, we can acquire a pointer to the
first superblock that stores part of $S$ in $O(\lg n / \lg \lg n)$
time~\cite[Lemma 7]{HM10a}.  Inside each superblock the substring is
further decomposed into a list of \emph{blocks} of length $O((\lg
n)^{3/2})$ bits each, in which only the final block has free space.
In order to produce the array $C_z$, we scan the list of blocks in the
superblock up to position $z$, reading $(\lg n)/2$ bits at a time.
Since each value in $[1..\sigma]$ occupies $\varepsilon_1 \lg \lg n$
bits, we can read $\Theta(\lg n / \lg \lg n)$ values at a time.  As we
read these values, we keep a running total of the ranks of each value
in $[1..\sigma]$ up to our current position.  Let \emph{field} $b_i$
denote $\rankop_i(S,p)$, where $p$ is our current position within $S$.
Clearly, $b_i$ occupies $O(\lg \lg n)$ bits.  Furthermore, let $b =
b_1 ... b_\sigma$ be the concatenation of these fields.  Thus, the
running total, $b$, contains at most $2\lg^{\varepsilon_1} \times
\varepsilon_1 \lg \lg n$ bits, and can be stored in a constant number
of words.

In order to efficiently update our running total $b$ after reading a
$(\lg n)/2$ bits from the current block, we perform a lookup in a
universal table $A$.  Let $a$ denote the $a$-th lexicographically
smallest string of length $(\lg n)/2$ bits, over the bounded universe
$[1,\sigma]$.  Also, let $f_i(a)$ denote the frequency of symbol $i
\in [1,\sigma]$ in string $a$. In each entry of the table $A[a]$, we
store the value $b' = f_1(a)...f_\sigma(a)$.  Since both $b$ and $b'$
fit in a constant number of words, we can exploit word-level
parallelism to update the running total by summing all the fields in
$b$ and $b'$ in $O(1)$ time.

The table $A$ occupies $O(2^{(\lg n)/2} \times \lg^{\varepsilon_1} n
\times \lg \lg n) = O(\sqrt{n} \times \polylog(n))$ bits, and allows
us to process $\Theta(\lg n/\lg \lg n)$ values in $O(1)$ time.  Recall
that the entire superblock contains $\Theta((\lg n / \lg \lg n)^2)$
values.  Thus, we can return $C_z$ in $O((\lg n / \lg \lg n)^2 \times
(\lg \lg n / \lg n))$ time.
\end{proof}

We now present the following lemma regarding the space and
construction time of the ranking trees:

\begin{lemma}
\label{lem:space-2}
Each ranking tree $R(v)$ occupies $O\left( \frac{m(\lg \lg
  n)^2}{\lg^{1 - \varepsilon_1} n} + w \right)$ bits of space if
$|T(v)| = m$, and requires $O(m)$ time to construct, assuming that we
have access to the string $Y(v)$.
\end{lemma}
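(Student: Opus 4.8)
The plan is to split the space of $R(v)$ into the total cost of its internal nodes plus a fixed $O(w)$-bit overhead, observing that the leaves of $R(v)$ are never stored --- they are shared with the dynamic string $Y_\ell$ and were already accounted for in Lemma~\ref{lem:space-1}. First I would count the internal nodes. Since $R(v)$ is a weight balanced B-tree with leaf parameter $\Theta((\lg n/\lg\lg n)^2)$ built on the $m$ points of $T(v)$, it has $L = \Theta\!\big(m(\lg\lg n)^2/(\lg n)^2\big)$ leaves, and since its branching parameter is $\lg^{\varepsilon_2}n$, every internal node has $\Omega(\lg^{\varepsilon_2}n)$ children; summing the geometric series over the levels, the number of internal nodes is $O(L/\lg^{\varepsilon_2}n)$.

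Next I would bound the space stored in an internal node $u$: a searchable partial sums structure over the $f_2 = \Theta(\lg^{\varepsilon_2}n)$ subtree counts, which is $O(\lg^{1+\varepsilon_2}n)$ bits by Lemma~\ref{lem:RRR}; child pointers plus the node's $x$-rank bounds, $O(\lg^{1+\varepsilon_2}n)$ bits; and the matrix structure $M^u$, which is $O(\lg^{1+\varepsilon_1+\varepsilon_2}n)$ bits by Lemma~\ref{lem:matrix-struct}. The matrix structure dominates, so the total is $O\!\big((L/\lg^{\varepsilon_2}n)\cdot\lg^{1+\varepsilon_1+\varepsilon_2}n\big) = O\!\big(L\cdot\lg^{1+\varepsilon_1}n\big) = O\!\big(m(\lg\lg n)^2/\lg^{1-\varepsilon_1}n\big)$ bits; adding $O(w)$ bits of base bookkeeping --- which also handles the case where $m$ is too small for $R(v)$ to contain any internal node --- yields the claimed space bound.

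For construction, I would build the tree skeleton in time linear in its number of nodes (which is $o(m)$) and compute every subtree size, which suffices to build the searchable partial sums structures in $O(\lg^{\varepsilon_2}n)$ time per node. Then I would compute, for each of the $L$ leaves, the histogram over the alphabet $[1..\sigma]$ with $\sigma = \Theta(\lg^{\varepsilon_1}n)$ of the substring of $Y(v)$ that leaf represents, using the word-parallel universal-table technique of Lemma~\ref{lem:leaf-query} in $O(\lg n/\lg\lg n)$ time per leaf, hence $O(L\lg n/\lg\lg n) = O(m\lg\lg n/\lg n) = o(m)$ time in total. A single bottom-up pass then fills $M^u$ at each internal node $u$ from the histograms of its $\le f_2$ children and, as a by-product, produces the histogram of $u$'s own subtree for its parent; this costs $O(\lg^{\varepsilon_1+\varepsilon_2}n)$ time per node --- summing $f_2$ histograms of $f_1 = \Theta(\lg^{\varepsilon_1}n)$ fields of $O(\lg n)$ bits each, plus the construction of $M^u$ by Lemma~\ref{lem:matrix-struct} --- so over $O(L/\lg^{\varepsilon_2}n)$ internal nodes it is $O\!\big(m(\lg\lg n)^2/\lg^{2-\varepsilon_1}n\big) = o(m)$. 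Together with the single scan of $Y(v)$, construction runs in $O(m)$ time.

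The \emph{main obstacle} is the construction time: the naive approach of scanning $Y(v)$ from left to right while updating a counter at each of the $\Theta(\lg n/\lg\lg n)$ levels for every point costs $\Theta(m\lg n/\lg\lg n)$, so it is essential to batch the work by leaf block (histograms via table lookup, then merged upward) and to notice that the matrix structure is simultaneously the dominant term in the per-node space and the bottleneck of the per-node construction. The other point to get right is that the geometric decay of the node counts contributes precisely the $1/\lg^{\varepsilon_2}n$ factor that cancels the $\lg^{\varepsilon_2}n$ in $|M^u|$, making the final bound independent of $\varepsilon_2$.
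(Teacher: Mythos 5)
Your space decomposition --- count the $O\bigl(m(\lg\lg n)^2 / (\lg^{\varepsilon_2} n \cdot \lg^2 n)\bigr)$ internal nodes, charge $O(\lg^{1+\varepsilon_1+\varepsilon_2}n)$ bits to each, and observe that the leaves are shared with $Y_\ell$ --- is the same one the paper uses, and your construction-time analysis (leaf histograms via the Lemma~\ref{lem:leaf-query} table lookup, then a bottom-up merge) fills in detail that the paper's proof leaves implicit; that part is a valid and somewhat more explicit route to the same $O(m)$ bound.

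There is one real gap, however: you write ``child pointers~\ldots~$O(\lg^{1+\varepsilon_2}n)$ bits'' as though a pointer between internal nodes costs $O(\lg n)$ bits, but in the model used here the word size is only bounded below by $\lg n$, so a pointer to a freshly allocated node generically costs $w$ bits, which may be $\omega(\lg n)$. With $w$-bit pointers, the pointer cost alone is $O\bigl(mw(\lg\lg n)^2/\lg^2 n\bigr)$ bits, which is not dominated by the claimed $O\bigl(m(\lg\lg n)^2/\lg^{1-\varepsilon_1}n + w\bigr)$ when $w$ is large. The paper closes exactly this hole by confining the whole ranking tree $R(v)$ to a single \emph{fixed memory area} (standard memory blocking for dynamic structures), so that intra-tree pointers are offsets of $O(\lg n)$ bits, with one genuine $w$-bit pointer to the area's start --- that is where the additive $O(w)$ term actually comes from. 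Your $O(w)$ term is instead attributed to ``base bookkeeping'' for the empty-tree case, which does not justify the small intra-tree pointers. You need to add the memory-blocking step (and note that it incurs only $O(1)$ amortized overhead per update) for the space bound to hold as stated.
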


\begin{proof}
The space occupied by the internal nodes is $O(m (\lg \lg n)^2 /
\lg^{1-\varepsilon_1}n)$ bits, since each internal node occupies
$O(\lg^{1+\varepsilon_1+\varepsilon_2} n)$ bits by
Lemmas~\ref{lem:RRR} and~\ref{lem:matrix-struct}, and the number of
internal nodes is $O(m/\lg^{\varepsilon_2}n \times (\lg \lg n / \lg
n)^2)$.  In order to reduce the cost of the pointers between the
internal nodes in $R(v)$ to $O(\lg n)$ bits per pointer, we make use
of well known memory blocking techniques for dynamic data structures
(e.g., see~\cite[Appendix J]{HM10a}).  The main idea is to allocate a
\emph{fixed memory area} for the entire ranking tree, and perform all
updates to the ranking tree using memory from this area.  After
$\Theta(m)$ updates, we allocate a new area and copy over the entire
ranking tree.  The cost of using this memory blocking will amount to
$O(1)$ amortized time per update. Thus the overall space is $O(m (\lg
\lg n)^2 / \lg^{1-\varepsilon_1}n + w)$ bits, since we must count the
pointer to the root of the ranking tree, stored at the start of the
fixed memory area.

Now we analyze the overall construction time.  Based on
Lemmas~\ref{lem:RRR} and~\ref{lem:matrix-struct}, we can construct all
the internal nodes in $O(m)$ time, since the number of internal nodes
is $O(m/\lg^{\varepsilon_2} \times (\lg \lg n / \lg n)^2)$, and each
requires $o(\lg^{1+\varepsilon_1 + \varepsilon_2}n) +
O(\lg^{\varepsilon_2} n)$ time to construct. 
\end{proof}

\begin{remark}
Note that the discussion in this section implies that we need not
store ranking trees for nodes $v\in T$, where $|T(v)| = O(\lg n / \lg
\lg n)^2$.  Instead, we can directly query the dynamic string $Y_\ell$
using Lemma~\ref{lem:leaf-query} in $O(\lg n / \lg \lg n)$ time to
make a branching decision in $T$.  This will be important in
Section~\ref{sec:dynarray}, since it significantly reduces the number
of pointers we need.
\end{remark}

\subsection{Answering Queries\label{sec:answering}}

In this section, we explain how to use our space efficient ranking
tree in order to guide a range selection query in $T$.

We are given a query $[x_1,x_2]$ as well as a rank $k$, and our goal
is to return the $k$-th smallest $y$-coordinate in the query range.
We begin our search at the root node $v$ of the tree $T$.  In order to
guide the search to the correct child of $v$, we determine the
canonical set of nodes in $R(v)$ that represent the query $[x_1,x_2]$.
Before we query $R(v)$, we search for $x_1$ and $x_2$ in $S_x$.  Let
$r_1$ and $r_2$ denote the ranks of the successor of $x_1$ and
predecessor of $x_2$ in $S$, respectively. We query $R(v)$ using
$[r_1, r_2]$, and use the searchable partial sum data structures
stored in each node of $R(v)$, to identify the canonical set of nodes
in $R(v)$ that represent the interval $[r_1,r_2]$.  At this point we
outline how to use the matrix structures in order to decide how to
branch in $T$.

\paragraph{Matrix Structures:}
We discuss a straightforward, \emph{slow method} of computing the
branch of the child of $v$ to follow.  We then discuss the details of
a faster method, which can also be found in the original
paper~\cite{BGJS10}.

In order to determine the child of $v$ that contains the $k$-th
smallest $y$-coordinate in the query range, recall that $T$ is sorted
by $y$-coordinate.  Let $f_1$ denote the degree of $v$, and $q'_i$
denote the number of points that are contained in the range $[x_1,
  x_2]$ in the subtree rooted at the $i$-th child of $v$, for $1 \le i
\le d$. Determining the child that contains the $k$-th smallest
$y$-coordinate in $[x_1,x_2]$ is equivalent to computing the value
$\tau$ such that $\sum_{i=1}^{\tau-1} q'_i < k$ and $\sum_{i=1}^{\tau}
q'_i \ge k$.  In order to compute $\tau$, we use the matrix structures
in each internal node of the canonical set of nodes, $C$, that
represent $[x_1,x_2]$.  The set $C$ contains $O(\lg n / \lg \lg n)$
internal nodes, as well as at most two leaf nodes.

Consider any internal node $u \in C$, and without loss of generality,
suppose $u$ was on the search path for $r_1$, but not the search path
for $r_2$, and that $u$ has degree $f_2$.  If the search path for
$r_1$ goes through child $c_q$ in $u$, then consider the difference
between columns $f_2$ and $q$ in the first representation of matrix
$M^u$.  We denote this difference as $M'^u$, where $M'^u_i =
M^u_{i,f_2} - M^{u}_{i,q}$, for $1 \le i \le f_1$.  For each internal
node $u \in C$ we add each $M'^u$ to a running total, and denote the
overall sum as $M'$.  Next, for each of the --- at most --- two leaves
on the search path, we query the superblocks of $Y_\ell$ to get the
relevant portions of the sums, and add them to $M'$.  At this point,
$M'_i = q'_i$, and it is a simple matter to scan each entry in $M'$ to
determine the value of $\tau$.  Since each matrix structure has $f_1$
entries in its columns, this overall process takes $O(f_1 \times \lg n
/ \lg \lg n) = O(\lg^{1+\varepsilon_1}n / \lg \lg n)$ time, since
there are $O(\lg n / \lg \lg n)$ levels in $R(v)$.  Since there are
$O(\lg n / \lg \lg n)$ levels in $T$, this costs $O((\lg n / \lg \lg
n)^2 \times \lg^{\varepsilon_1}n)$ time.  This time bound can be
reduced by a factor of $f_1 = O(\lg^{\varepsilon_1}n)$, using
a slightly faster method which we now describe.

\paragraph{Faster method:} The main idea of the fast method is to use
word-level parallelism and the second representation of each matrix in
order to speed up the query time.  When we begin our search in the
root node $v$ of the tree $T$, consider the sections of $k$, denoted
$k_1, ..., k_{g}$.  We query $R(v)$, using the first section $k_1$ to
guide the search.  In order to remove the $f_1$ factor from the slow
method, in each internal node $u$ we subtract the packed words
$w^u_{1,f_2} - w^u_{1,q}$, then add them to a running total $w_1$.
After we have summed the differences between the packed words in all
the internal nodes, we add the relevant sections from the canonical
leaf nodes using Lemma~\ref{lem:leaf-query}.

Since $w_1$ is only a rough approximation of the first section of
$M'$, each value in the packed word might be off by $O(\lg n / \lg \lg
n)$: the number of additions and subtractions we used to compute
$w_1$.  This means there may be errors, caused by carry bits, in
possibly the $O(\lg \lg n)$ least significant bits in each value
stored in the packed word $w_1$.  We scan each entry in $w_1$ to
determine the indices of the first and last entries in $w_1$ that
match $k_1$, except for the last $O(\lg \lg n)$ bits~\footnote{This
  can be done in constant time using parallel subtraction as
  in~\cite{BGJS10}, but that is only necessary in the static case,
  where we are not allowed to spend an additive $O(\lg^{\varepsilon_1}
  n)$ factor at each level in $T$.}, as well as the first value that
is greater than $k_1$ in a more significant bit beyond the $O(\lg \lg
n)$ least significant bits.  Let $K = \{ e_1, ..., e_{b} \}$ denote
this set of entries in the packed word $w_1$.

Next, we check the largest and smallest entries, $e_1$ and $e_{b}$, in
$K$ in order to determine if one of these is $\tau$.  This can be done
in $O(\lg n/ \lg \lg n)$ time. If neither $e_1$ or $e_b$ is $\tau$,
then there are several cases for how to proceed.  If there are only a
constant number of values in $K$, then we can compute the index,
$\tau$, of the child of $v$ that we should branch to, by computing the
entries in $M'$ for these values in $O(\lg n/ \lg \lg n)$ time each.
We call this the \emph{good case}.  However, if there are a
non-constant number of entries in $K$, then we are in the \emph{bad
  case}, and we must do a binary search over $M'$ to determine $\tau$.
This costs $O(\lg \lg n \times \lg n / \lg \lg n) = O(\lg n)$ time in
total.

The key observation, is that after we do the binary search for $\tau$
in the bad case, at \emph{no point in the future} will we ever have to
examine the first section of $k$ or the matrix structures.  This is
because when we are in the bad case, the difference between the first
sections of $q'_{\tau+1}$ and $q'_{\tau}$ is a value that can be
stored in $\Theta(\lg \lg n)$ bits: which is why the overlap between
sections was set to be $\Theta(\lg \lg n)$.  Moreover, since there are
only $g = O(\lg^{\varepsilon_1}n)$ sections, we can spend at most
$O((\lg n)/g \times \lg n) = o((\lg n / \lg \lg n)^2)$ time in bad
cases before we have exhausted all of the bits in the matrix
structures; once there are no more bits, we are guaranteed to have
found the $k$-th smallest $y$-coordinate in the query range.  Since
the good case requires $O(\lg n / \lg \lg n)$ time, and there are
$O(\lg n/ \lg \lg n)$ levels in $T$, our search costs at most $O((\lg
n / \lg \lg n)^2)$ time.

\paragraph{Recursively Searching in $T$:} 
Let $v_\tau$ denote the $\tau$-th child of $v$.  The final detail to
discuss is how we translate the ranks $[r_1,r_2]$ into ranks in the
tree $R(v_\tau)$.  To do this, we query the string $Y(v)$ before
recursing to $v_\tau$.  We use two cases to describe how to find
$Y(v)$ within $Y_\ell$. In the first case, if $v$ is the root of $T$,
then $Y_\ell = Y(v)$.  Otherwise, suppose the range in $Y_{\ell - 1}$
that stores the parent $v_p$ of node $v$ begins at position $z$, and
$v$ is the $i$-th child of $v_p$.  Let $c_{j} = \rangecount(Y(v), z,
z+|Y(v)|,1,j)$ for $1 \le j \le f_1$. Then, the range in $Y_\ell$ that
stores $Y(v)$ is $[z + c_{i-1}, z + c_{i}]$. We then query $Y(v)$, and
set $r_1 = \rankop_\tau(Y(v),r_1)$, $r_2 = \rankop_\tau(Y(v),r_2)$, $k
= k - q'_{\tau-1}$, and recurse to $v_\tau$.  We present the following
lemma, summarizing the arguments presented thus far:

\begin{lemma}
\label{lem:query}
The data structures described in this section allow us to answer range
selection queries in $O((\lg n/ \lg \lg n)^2)$ time.
\end{lemma}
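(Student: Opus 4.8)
The plan is to assemble the pieces developed in Section~\ref{sec:answering} into a single top-down traversal of $T$ and to bound the total cost level by level. First I would describe the invariant maintained as we descend: upon arriving at a node $v\in T$, we hold a pair of ranks $[r_1,r_2]$ that are the $x$-coordinate ranks within $T(v)$ of the successor of $x_1$ and the predecessor of $x_2$, together with a residual rank $k$ which is the position, among the points of $T(v)$ falling in $[x_1,x_2]$, of the answer we seek. The base of the descent is established by querying $S_x$ once in $O(\lg n)$ time to obtain the initial $r_1$ and $r_2$ (as described just before Figure~\ref{fig:stringfig}), with $k$ unchanged; the invariant then holds at the root. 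I would then argue that when the invariant holds at a leaf of $T$, the stored point is exactly the $k$-th smallest $y$-coordinate in the query range, since $T$ is sorted by $y$-coordinate and leaves are singletons.

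Next I would carry out the inductive step, i.e.\ one level of descent from $v$ to its child $v_\tau$. This has three sub-steps, each already analysed in the text. (i) Locate the canonical set $C$ of $O(\lg n/\lg\lg n)$ internal nodes plus at most two leaves of $R(v)$ representing $[r_1,r_2]$, using the searchable partial-sum structures of Lemma~\ref{lem:RRR} in $O(1)$ per node, hence $O(\lg n/\lg\lg n)$ total. (ii) Use the matrix structures via the faster method to compute the branching index $\tau$; the leaf contributions to the column sums are obtained from Lemma~\ref{lem:leaf-query} in $O(\lg n/\lg\lg n)$ time. (iii) Translate $[r_1,r_2]$ and $k$ into the corresponding quantities inside $R(v_\tau)$: this is the ``Recursively Searching in $T$'' paragraph --- find the range of $Y(v)$ within $Y_\ell$ using a constant number of $\rangecount$ calls on $Y_\ell$ (each $O(\lg n/\lg\lg n)$ by Lemma~\ref{lem:smallalphabet}), then set $r_1\!=\!\rankop_\tau(Y(v),r_1)$, $r_2\!=\!\rankop_\tau(Y(v),r_2)$ and decrement $k$ by the count $q'_{\tau-1}$ of in-range points in the first $\tau-1$ children, re-establishing the invariant at $v_\tau$.

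The main obstacle, and the part I would spend the most care on, is the amortised accounting for step~(ii) across the whole traversal, because the per-level cost is not uniformly $O(\lg n/\lg\lg n)$: the ``good case'' costs $O(\lg n/\lg\lg n)$ but a ``bad case'' at some level costs $\Theta(\lg n)$. I would reproduce the potential-style argument from the faster-method discussion: each bad case permanently retires one of the $g=\Theta(\lg^{\varepsilon_1}n)$ sections of $k$ (the relevant differences thereafter fit in the $\Theta(\lg\lg n)$-bit section overlap and never need to be re-examined), so over the entire root-to-leaf path the number of bad cases is $O(\lg n/g \cdot 1)$ relative to section exhaustion, contributing only $O((\lg n/g)\cdot\lg n)=o((\lg n/\lg\lg n)^2)$ in total; here I would double-check the arithmetic $(\lg n/g)\cdot\lg n = \lg^{2-\varepsilon_1}n = o((\lg n/\lg\lg n)^2)$. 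Once sections are exhausted the answer has already been pinned down. I would also note that $R(v)$ has height $\Theta(\lg n/\lg\lg n)$, so the within-$R(v)$ bookkeeping in step~(i) and the $O(f_1\cdot\lg n/\lg\lg n)$-style terms in the slow fallback used inside good cases are dominated correctly.

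Putting it together: $T$ has $h_1=\Theta(\lg n/\lg\lg n)$ levels; each level contributes $O(\lg n/\lg\lg n)$ worst-case work for steps~(i) and~(iii) and for good-case step~(ii), giving $O((\lg n/\lg\lg n)^2)$; the bad cases contribute a strictly smaller additive $o((\lg n/\lg\lg n)^2)$; the one-time $S_x$ lookup is $O(\lg n)$. Hence the total query time is $O((\lg n/\lg\lg n)^2)$, which is the claim of Lemma~\ref{lem:query}. I would close by remarking that correctness follows from the maintained invariant together with the base case at the leaves, so that the value returned is precisely the $k$-th smallest $y$-coordinate among points with $x$-coordinate in $[x_1,x_2]$.
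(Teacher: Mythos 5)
Your proposal is correct and takes essentially the same approach as the paper: it repackages the discussion of Section~\ref{sec:answering} as an explicit invariant with a per-level inductive step, and reproduces the same good/bad-case amortization for the matrix structures, the same use of Lemma~\ref{lem:leaf-query} for leaf contributions, and the same rank translation via $\rangecount$/$\rankop$ on the dynamic strings $Y_\ell$.
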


\subsection{Handling Updates}

In this section, we describe the algorithm for updating the data
structures.  We start by describing how insertions are performed.
First, we insert $p$ into $S_x$ and look up the rank, $r_x$, of $p$'s
$x$-coordinate in $S_x$.  Next, we use the values stored in each
internal node in $T$ to find $p$'s predecessor by $y$-coordinate,
$p'$.  We update the path from $p'$ to the root of $T$.  If a node $v$
on this path splits, we must rebuild the ranking tree in the parent
node $v_p$ at level $\ell$, as well as the dynamic string $Y_{\ell}$.

Next, we update $T$ in a top-down manner; starting from the root of
$T$ and following the path to the leaf storing $p$.  Suppose that at
some arbitrary node $v$ in this path, the path branches to the $j$-th
child of $v$, which we denote $v_j$.  We insert the symbol $j$ into
its appropriate position in $Y_\ell$. After updating $Y_\ell$ --- its
leaves in particular --- we insert the symbol $j$ into the ranking
tree $R(v)$, at position $r_x$, where $r_x$ is the rank of the
$x$-coordinate of $p$ among the points in $T(v)$.  As in $T$, each
time a node splits in $R(v)$, we must rebuild the data structures in
the parent node.  We then update the nodes along the update path in
$R(v)$ in a top-down manner: each update in $R(v)$ must be processed
by all of the auxiliary data structures in each node along the update
path.  Thus, in each internal node, we must update the searchable
partial sums data structures, as well as the matrix structures.

After updating the structures at level $\ell$, we use $Y_\ell$ to map
$r_x$ to its appropriate rank by $x$-coordinate in $T(v_j)$.  At this
point, we can recurse to $v_j$.  In the case of deletions, we follow
the convention of Brodal et al.~\cite{BGJS10} and use node marking,
and rebuild the entire data structure after $\Theta(n)$ updates.  We
present the following theorem:

\begin{theorem}\label{thm:main}
Given a set $S$ of points in the plane, there is a linear space
dynamic data structure representing $S$ that supports range selection
queries for any range $[x_1,x_2]$ in $O((\lg n / \lg \lg n)^2)$ time,
and supports insertions and deletions in $O((\lg n / \lg \lg n)^2)$
amortized time.
\end{theorem}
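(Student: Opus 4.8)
The plan is to obtain Theorem~\ref{thm:main} by assembling the lemmas of this section, treating space, query time, and update time in turn. \emph{Space:} by Lemma~\ref{lem:space-1}, every component other than the ranking trees occupies $O(n)$ words, so it suffices to bound the total size of the ranking trees. By the Remark following Lemma~\ref{lem:space-2}, a ranking tree $R(v)$ is stored only when $|T(v)| = \omega((\lg n/\lg\lg n)^2)$, so at each level of $T$ there are $O(n/(\lg n/\lg\lg n)^2)$ such nodes, while the subtree sizes $|T(v)|$ at a fixed level sum to at most $n$. Plugging $\sum_v |T(v)| \le n$ into Lemma~\ref{lem:space-2}, the ranking trees at one level of $T$ occupy $O(n(\lg\lg n)^2/\lg^{1-\varepsilon_1}n) + O(n\,w/(\lg n/\lg\lg n)^2)$ bits; since $w = O(\lg n)$, $\varepsilon_1 < 1/2$, and $T$ has $\Theta(\lg n/\lg\lg n)$ levels, the grand total is $O(n \lg^{\varepsilon_1} n\,\lg\lg n) = o(n \lg n)$ bits, i.e.\ $o(n)$ words, so the whole structure uses $O(n)$ words. \emph{Query time:} this is exactly Lemma~\ref{lem:query}, namely $O((\lg n/\lg\lg n)^2)$.

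\emph{Update time (insertions).} I would bound each step of the insertion procedure described above. Updating $S_x$ and locating the $y$-predecessor of the new point cost $O(\lg n)$. The descent of $T$ from the root to the new leaf visits $\Theta(\lg n/\lg\lg n)$ nodes; at the visited node $v$ on level $\ell$ we insert one symbol into $Y_\ell$ in $O(\lg n/\lg\lg n)$ time by Lemma~\ref{lem:smallalphabet} --- which also updates the leaves shared with $R(v)$ --- and then walk the $\Theta(\lg n/\lg\lg n)$-node update path of $R(v)$, spending $O(1)$ per node on the searchable partial sums (Lemma~\ref{lem:RRR}) and $O(1)$ amortized per node on the matrix structures (Lemma~\ref{lem:matrix-struct}). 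This contributes $O(\lg n/\lg\lg n)$ per level of $T$ and hence $O((\lg n/\lg\lg n)^2)$ in total.

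\emph{Update time (splits and deletions).} The remaining cost is that of node splits, in $T$ and inside the ranking trees, for which I would invoke the weight-balanced B-tree property~\cite{AV03}: a node of weight $m$ participates in a split only once per $\Theta(m)$ updates to its subtree. When a node $v$ of $T$ splits we rebuild the ranking tree $R(v_p)$ of its parent in $O(|T(v_p)|)$ time (Lemma~\ref{lem:space-2}) and re-encode the chunk $Y(v_p)$ of $Y_\ell$ in $O(|T(v_p)|)$ time using the batch-update bound of Lemma~\ref{lem:smallalphabet} --- valid since a node $v_p$ carrying a ranking tree has $|T(v_p)|$ exceeding the $5W/\lg\sigma$ threshold; since $|T(v_p)| = \Theta(\lg^{\varepsilon_1} n\,|T(v)|)$, this amortizes to $O(\lg^{\varepsilon_1} n)$ per insertion per level of $T$, i.e.\ $O(\lg^{1+\varepsilon_1} n/\lg\lg n) = o((\lg n/\lg\lg n)^2)$ overall. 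The analogous accounting of splits inside each $R(v)$ --- each rebuild touching an $O(\lg^{\varepsilon_2} n)$-child node with per-child construction cost $O(\lg^{\varepsilon_1+\varepsilon_2} n)$ by Lemmas~\ref{lem:RRR} and~\ref{lem:matrix-struct} --- contributes only a $\lg^{O(\varepsilon_1+\varepsilon_2)} n\cdot(\lg\lg n)^{O(1)}$ overhead per level, which for $\varepsilon_1 < 1/2$ and $\varepsilon_2 < 1-\varepsilon_1$ chosen sufficiently small stays $o((\lg n/\lg\lg n)^2)$ amortized. Deletions are handled by the node-marking convention of~\cite{BGJS10} together with global rebuilding every $\Theta(n)$ updates; since the whole structure can be rebuilt in $O(n\lg n)$ time (dominated by the initial sort by $y$-coordinate and by constructing the $\Theta(\lg n/\lg\lg n)$ levels of ranking trees at $O(n)$ time per level by Lemma~\ref{lem:space-2}), this adds only $O(\lg n)$ amortized time per update.

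\emph{Main obstacle.} The space and query bounds are routine bookkeeping; the delicate part is the amortized update analysis. Its crux is twofold: first, using the batch-update capability of Lemma~\ref{lem:smallalphabet} so that re-encoding a length-$m$ chunk of $Y_\ell$ after a split costs $O(m)$ rather than $O(m\lg n/\lg\lg n)$; and second, choosing the branching exponents $\varepsilon_1$ and $\varepsilon_2$ small enough that every per-split rebuild cost, once amortized against the $\Theta(\text{weight})$ updates separating consecutive splits of a node, leaves only a $\lg^{o(1)} n$ (times polylogarithmic-in-$\lg n$) overhead per level, hence $o((\lg n/\lg\lg n)^2)$ in total.
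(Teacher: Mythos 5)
Your proposal follows essentially the same approach as the paper's own proof---query time via Lemma~\ref{lem:query}, space via Lemmas~\ref{lem:space-1} and~\ref{lem:space-2} (with the per-level summation you spell out being implicit in the paper), and the update analysis charging split costs against the weight-balanced B-tree invariant. However, you gloss over the step that the paper treats as the central technical difficulty of the update argument: after a node $v$ of $T$ splits, the symbols in the chunk $Y(v_p)$ of $Y_\ell$ must actually \emph{change} (the indices of $v_p$'s children are renumbered), so before any batch update can be applied one must first \emph{compute} the new string $Y'(v_p)$. The paper does this by extracting both $Y(v_p)$ from $Y_\ell$ and $Y(v)$ from $Y_{\ell+1}$ and merging them in a single scan, producing $Y'(v_p)$ in $O(m\lg^{\varepsilon_1}n + \lg n/\lg\lg n)$ time and showing the additive $O(\lg n/\lg\lg n)$ term is negligible outside a constant number of bottom levels. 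Your phrase ``re-encode the chunk $Y(v_p)$ of $Y_\ell$ in $O(|T(v_p)|)$ time using the batch-update bound'' presumes this content is available for free, which is exactly the gap the paper's proof fills.

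A smaller imprecision: you justify the applicability of the batch-update in Lemma~\ref{lem:smallalphabet} by noting that ``a node $v_p$ carrying a ranking tree has $|T(v_p)|$ exceeding the $5W/\lg\sigma$ threshold.'' But the dynamic strings $Y_\ell$ are stored at \emph{every} level of $T$, including the bottom levels where no ranking tree is stored; a split of a small node still forces a rewrite of the corresponding chunk of $Y_\ell$. The correct argument (which the paper gives) is that when $|T(v_p)|$ is below the batch threshold one falls back to $O(\lg n/\lg\lg n)$-time per-symbol updates, and this slow path can only be triggered in a constant number of bottom levels of $T$, so its contribution is still $O(\lg^{\varepsilon_1}n \cdot \lg n / \lg\lg n)$ amortized overall. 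Also, the paper does not require choosing $\varepsilon_1, \varepsilon_2$ ``sufficiently small''---any fixed $\varepsilon_1 \in (0,1/2)$ and $\varepsilon_2 \in (0, 1-\varepsilon_1)$ suffice; your phrasing overstates the constraint.
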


\begin{proof}
The query time follows from Lemma~\ref{lem:query} and the space from
Lemmas~\ref{lem:space-1} and~\ref{lem:space-2}. All that remains is to
analyze the update time.

We can insert a point $p$ into $S_x$ in $O(\lg n)$ time.  The node
structure of $T$ can be updated in $O(\lg^{\varepsilon_1}n)$ amortized
time by the properties of weight balanced B-trees.  Similarly, the
node structure of $R(v)$ for each node $v$ in the update path in $T$
can be updated in $O(\lg^{\varepsilon_2}n)$ amortized time, and there
are $O(\lg n / \lg \lg n)$ ranking trees that are updated.  Thus, the
tree structure of $T$ and the ranking trees in each node can be
updated in $o((\lg n / \lg \lg n)^2)$ time.

The difficulty arises when a node $v$ in $T$ splits, since the index
of $v$ relative to the other children of $v_p$ has changed.  In this
case, we are required to not only rebuild $R(v_p)$, but also the
substring of $Y_\ell$ that stores $Y(v_p)$.  If $T(v)$ contains $m$
points, then $Y(v_p)$ is a string of length $O(m \times
\lg^{\varepsilon_1}n)$, and constructing $R(v_p)$ takes
$O(\lg^{\varepsilon_1}n \times m)$ time after $O(m)$ updates, by
Lemma~\ref{lem:space-2}.  This is $O(\lg^{\varepsilon_1}n \times \lg n
/ \lg \lg n)$ amortized time in total, since splits can occur in each
level of $T$.

One issue is that this analysis assumes that we have access to the
updated version of $Y(v_p)$, storing the indices of the children of
$v_p$, sorted by $x$-coordinate, \emph{after the split}.  We now
explain the technical details of how to compute this string; a task
that requires a few more definitions. Let $v_1$ and $v_2$ denote the
two nodes into which $v$ splits, and let $c_1, ..., c_f$, where $f \in
\Theta(\lg^{\varepsilon_1}n)$ is the degree of $v$, denote the
left-to-right sequence of children of $v$ before the split.  Suppose
that after the split, children $c_1, ..., c_d$ become the children of
$v_1$ and $c_{d+1}, ..., c_f$ become the children of $v_2$, where $1
\le d \le f$, and $d = \Theta(\lg^{\varepsilon_1}n)$.  Also suppose
that $v$ is the $i$-th child of $v_p$, and denote the degree of $v_p$
as $f_p$.

First, we extract the strings $Y(v_p)$ and $Y(v)$ from $Y_\ell$ and
$Y_{\ell+1}$: this requires $O(m \lg^{\varepsilon_1}n + \lg n / \lg
\lg n)$ time in total, since we must traverse a root-to-leaf path in
the dynamic strings.  The next step is to scan both strings $Y(v_p)$
and $Y(v)$ together, and at the same time write an updated string
$Y'(v_p)$, which will be the sequence of indices of $v_p$'s children,
after $v$ is split.  When we encounter the index $c$ in $Y(v_p)$, we
append $c$ to $Y'(v_p)$ if $c \in \{1 ,..., i-1\}$, and $c+1$ if $c
\in \{i+1,...,f_p\}$.  In the case when $c = i$, then we check the
corresponding index $c'$ in $Y(v)$: if $c' \in \{ 1,..., d\}$, then we
append $c$ to $Y'(v_p)$, and a $c+1$ otherwise.  For example, let
$Y(v) = \{ 1,2,3,4,1 \}$, $Y(v_p) = \{1,3,3,2,4,1,2,3,3,4,3,1\}$, and
$v$ be the $3$-rd child of $v_p$.  Suppose that $v$ is to be split so
that children $1$ and $2$ become the children of $v_1$ and $3$ and $4$
become the children of $v_2$.  Then, following the steps we just
described, $Y'(v_p) = \{ 1,3,3,2,5,1,2,4,4,5,3,1\}$.  Overall,
generating $Y'(v_p)$ takes $O(m \lg^{\varepsilon_1}n + \lg n / \lg \lg
n)$ time, since we do one scan through both $Y(v_p)$ and $Y(v)$.  The
additive $O(\lg n / \lg \lg n)$ term is absorbed in all but a constant
number of levels near the bottom of $T$, where $m =
o(\lg^{1-\varepsilon_1}n / \lg \lg n)$.  Thus, the string generation
algorithm described above requires $O(\lg^{\varepsilon_1}n \times \lg
n / \lg \lg n)$ amortized time, when we consider that each level in
$T$ can split during an insertion.

When a split occurs in $T$, we must also do a batched update on the
dynamic string $Y_\ell$, where $\ell$ is the level of $v_p$.  To do
this, we make use of the batched insertion operation from
Lemma~\ref{lem:smallalphabet}.  When $|T(v_p)| > 5W / \Theta(\lg \lg
n) $, where $W$ is the value in Lemma~\ref{lem:smallalphabet}, we can
replace the $O(m \lg^{\varepsilon_1}n)$ values representing $Y(v_p)$
with $Y'(v_p)$ in $O(m \lg^{\varepsilon_1}n)$ time.  However, in the
alternative case, when $|T(v_p)|$ is small, we just directly insert
and delete values into $Y_\ell$ in $O(\lg n / \lg \lg n)$ time per
operation.  As with the string generation algorithm, the case where
$|T(v_p)|$ is too small for batched updates only occurs in a constant
number of bottom levels of $T$.  Thus, the overall cost for updating
$Y_\ell$ for every level in $T$ in which a node is split takes
$O(\lg^{\varepsilon_1}n \times \lg n / \lg \lg n)$ amortized time.

Finally, we consider the more common case where $v$ does not split
during the insertion, and how to update $R(v)$.  Consider the update
path in $R(v)$, and an arbitrary node $u$ on this path.  We can update
the searchable partial sums structure in $u$ in $O(1)$ time in the
worst case by Lemma~\ref{lem:RRR}.  If $u$ is split, then the cost of
rebuilding the searchable partial sums structure is absorbed by the
cost of rebuilding $u$.  The matrix structures can be rebuilt in
$O(1)$ amortized time per internal node on the update path by
Lemma~\ref{lem:matrix-struct}, or $O(\lg n / \lg \lg n)$ amortized
time per ranking tree.  Each conceptual leaf in $R(v)$ takes $O(\lg n/
\lg \lg n)$ time to update by Lemma~\ref{lem:smallalphabet} --- since
we must update the dynamic string --- but there are at most two leaves
updated per ranking tree.  Overall we get that the cost of each update
is $O((\lg n / \lg \lg n)^2)$ amortized time.  
\end{proof}

\section{Dynamic Arrays\label{sec:dynarray}}

In this section, we show how to adapt Theorem~\ref{thm:main} for
problem of maintaining a dynamic array $A$ of values drawn from a
bounded universe $[1 ..\sigma]$.  A query consists of a range in the
array, $[i..j]$, along with an integer $k > 0$, and the output is the
$k$-th smallest value in the subarray $A[i..j]$.  Inserting a value
into position $i$ shifts the position of the values in positions
$A[i..n]$ to $A[i+1..n+1]$, and deletions are analogous.  We present
the following theorem:

\begin{theorem}
\label{thm:dynarray}
Given an array $A[1..n]$ of values drawn from a bounded universe
$[1..\sigma]$, there is an $n H_0(A) + o(n \lg \sigma) + O(w)$ bit
data structure that can support range selection queries on $A$ in
$O(\frac{\lg n}{\lg \lg n} \frac{\lg \sigma}{\lg \lg \sigma})$ time,
and insertions into, and deletions from, $A$ in $O(\frac{\lg n}{\lg
  \lg n} \frac{\lg \sigma}{\lg \lg \sigma})$ amortized time.
\end{theorem}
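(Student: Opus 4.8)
The plan is to adapt the construction of Theorem~\ref{thm:main} by replacing the leaf parameter of the main tree~$T$ and re-balancing the branching parameters so that the cost is governed by $\lg\sigma$ rather than $\lg n$, and then to fold the auxiliary structures into the dynamic-string structures so that the total space becomes succinct rather than merely linear. Concretely, I would build $T$ as a weight-balanced B-tree with branching parameter $\Theta(\lg^{\varepsilon_1}\sigma)$ and leaf parameter~$1$ over the $\sigma$ distinct values $[1..\sigma]$, so that $T$ has height $\Theta(\lg\sigma/\lg\lg\sigma)$; the ``$y$-coordinate'' of a point is its value in $A$ and its ``$x$-coordinate'' is its position in the array. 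Since the values come from a bounded universe, the red-black tree $S_x$ used in the linear-space construction to translate $x$-ranges into ranks is no longer needed in that form — a range $[i..j]$ in the array \emph{is} already a rank range — which removes the one $O(\lg n)$-bit-per-pointer bottleneck. The strings $Y_\ell$ at each of the $\Theta(\lg\sigma/\lg\lg\sigma)$ levels of $T$ are represented with the dynamic string structure of Lemma~\ref{lem:smallalphabet}, each over an alphabet of size $O(\lg^{\varepsilon_1}\sigma)$, contributing $\sum_\ell n H_0(Y_\ell) + O(n\lg\lg\sigma\cdot\lg\lg n/\sqrt{\lg n}) + O(w)$ bits per level; summing over levels and noting that the per-level zeroth-order entropies telescope (each point contributes the bits needed to specify its root-to-leaf path, which sum to $H_0(A)$) gives $n H_0(A) + o(n\lg\sigma) + O(w)$ bits, provided $\varepsilon_1$ is chosen so the error terms stay below $o(n\lg\sigma/(\lg\sigma/\lg\lg\sigma)) \cdot (\lg\sigma/\lg\lg\sigma) = o(n\lg\sigma)$.

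The ranking trees $R(v)$ are instantiated exactly as in Section~\ref{sec:serank} but with all occurrences of $\lg n$ in the branching parameter, leaf parameter, and matrix-section widths replaced by $\lg\sigma$; by Lemma~\ref{lem:space-2} with this substitution, each $R(v)$ occupies $O(m(\lg\lg\sigma)^2/\lg^{1-\varepsilon_1}\sigma + w)$ bits where $m = |T(v)|$, and invoking the remark after Lemma~\ref{lem:space-2} we store \emph{no} ranking tree at nodes with $|T(v)| = O((\lg\sigma/\lg\lg\sigma)^2)$, instead querying $Y_\ell$ directly via (the $\sigma$-analogue of) Lemma~\ref{lem:leaf-query}; this is what keeps the pointer overhead to $o(n\lg\sigma) + O(w)$ bits rather than $O(n)$ words. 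Summing the ranking-tree space over all nodes of a fixed level gives $O(n(\lg\lg\sigma)^2/\lg^{1-\varepsilon_1}\sigma)$ bits per level and $O(n(\lg\lg\sigma)^2\lg\sigma/(\lg^{1-\varepsilon_1}\sigma\cdot\lg\lg\sigma))$ bits over all $\Theta(\lg\sigma/\lg\lg\sigma)$ levels, which is $o(n\lg\sigma)$.

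For queries, the search of Section~\ref{sec:answering} carries over verbatim with $\lg n \to \lg\sigma$ inside each level: each of the $O(\lg\sigma/\lg\lg\sigma)$ levels of $T$ costs the time to walk a ranking tree of height $O(\lg\sigma/\lg\lg\sigma)$ using the packed-word/matrix machinery, i.e.\ $O(\lg\sigma/\lg\lg\sigma)$ amortized per level after the good-case/bad-case accounting, plus the cost of accessing the dynamic string $Y_\ell$, which by Lemma~\ref{lem:smallalphabet} is $O(\lg n/\lg\lg n)$ per level; the latter term dominates and summing over levels yields $O(\frac{\lg n}{\lg\lg n}\cdot\frac{\lg\sigma}{\lg\lg\sigma})$ total query time. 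The update argument of Theorem~\ref{thm:main} transfers identically — split/rebuild of $R(v_p)$ and batched rebuild of the relevant substring of $Y_\ell$ via Lemma~\ref{lem:smallalphabet}, amortized over $\Theta(m)$ updates — giving the same amortized update bound, and deletions are handled by node marking with a global rebuild every $\Theta(n)$ updates. The main obstacle I anticipate is the space accounting: one must verify that replacing $\lg n$ by $\lg\sigma$ everywhere does not break the $O(w)$ additive terms from the universal tables of Lemmas~\ref{lem:RRR} and~\ref{lem:leaf-query} (these tables have size $O(n^{\mu'})$ and $O(\sqrt n\,\polylog n)$ bits, which must be re-examined when the relevant alphabet parameter is $\sigma \ll n$, ensuring they remain $o(n\lg\sigma)+O(w)$), and that the telescoping of the $H_0(Y_\ell)$ terms is tight enough — i.e.\ that the redundancy accumulated across $\Theta(\lg\sigma/\lg\lg\sigma)$ levels genuinely stays in the $o(n\lg\sigma)$ regime for the chosen $\varepsilon_1 \in (0,1/2)$.
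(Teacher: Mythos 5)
Your high-level strategy is right, and matches the paper's: reduce the height of the main tree to $\Theta(\lg\sigma/\lg\lg\sigma)$ by building it over the universe $[1..\sigma]$ with fan-out $\Theta(\lg^{\varepsilon_1}\sigma)$ (the paper calls this a generalized wavelet tree; your ``weight-balanced B-tree with leaf parameter $1$ over the $\sigma$ values'' is essentially the same static object), discard $S_x$ since positions are already ranks, represent each level by a dynamic string $Y_\ell$ over a small alphabet, and sum the $nH_0(Y_\ell)$ terms as in Ferragina et al.\ to get $nH_0(A)$.

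However, there is a genuine gap in your treatment of the ranking trees. You propose to ``replace all occurrences of $\lg n$ in the branching parameter, leaf parameter, and matrix-section widths by $\lg\sigma$,'' and then assert that each ranking tree has height $O(\lg\sigma/\lg\lg\sigma)$. That is not correct. A ranking tree $R(v)$ still indexes the $|T(v)|$ array positions inside $v$'s subtree (up to $n$ of them at the root), and its leaves must coincide with the superblocks of the dynamic string $Y_\ell$, which have size governed by $W=\Theta(\lg^2 n/\lg\lg n)$ (Lemma~\ref{lem:smallalphabet} and Lemma~\ref{lem:leaf-query} are stated with these $\lg n$-dependent sizes precisely so that a ranking-tree leaf maps to $O(1)$ superblocks). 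If you shrink the leaf parameter to $\Theta((\lg\sigma/\lg\lg\sigma)^2)$ and the branching to $\Theta(\lg^{\varepsilon_2}\sigma)$, the height of $R(\text{root})$ becomes $\Theta(\lg n/\lg\lg\sigma)$, not $O(\lg\sigma/\lg\lg\sigma)$; for $\sigma$ polylogarithmic in $n$ this is strictly worse than $\Theta(\lg n/\lg\lg n)$, and the per-level query cost and the ranking-tree space blow up (e.g.\ summing $O(m\lg n(\lg\lg\sigma)^2/\lg^{2-\varepsilon_1}\sigma)$ bits per tree over levels is no longer $o(n\lg\sigma)$ when $\sigma=O(\lg n)$). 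The correct move, which the paper makes implicitly, is to keep the ranking-tree branching and leaf parameters in terms of $\lg n$ (so the ranking tree still has height $\Theta(\lg n/\lg\lg n)$ and leaves align with superblocks), and only change the $f_1$-dimension of the matrix structures to $\Theta(\lg^{\varepsilon_1}\sigma)$ to reflect the smaller fan-out of $T'$. The per-level cost then remains $O(\lg n/\lg\lg n)$, and multiplying by the $\Theta(\lg\sigma/\lg\lg\sigma)$ levels of $T'$ gives the stated query time. (The paper makes exactly this point when it bounds the total bad-case time by $O((f'\lg f'\lg n)/\lg\lg n)$, which uses $\lg n$, not $\lg\sigma$, in the ranking-tree term.)

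A secondary gap is the pointer accounting. You claim that omitting ranking trees at nodes with $|T(v)|=O((\lg n/\lg\lg n)^2)$ alone keeps pointer overhead at $o(n\lg\sigma)+O(w)$, but even after pruning small subtrees the paper still observes $O(nw(\lg\lg n)^2/(\lg^2 n\lg^{\varepsilon_1}\sigma))$ bits of pointer cost, which is not $O(w)$-safe. The paper closes this by allocating one fixed memory area per level (as in Lemma~\ref{lem:space-2}) so that within-level pointers cost only $O(\lg n)$ bits each, leaving $O(w\lg\sigma/\lg\lg\sigma)$ bits for the one pointer per level, and then applies the M\"akinen--Navarro memory-management technique to compress even that to $O(w)$. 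Your proposal does flag that the universal-table and additive-$O(w)$ terms ``must be re-examined,'' which is the right instinct, but it does not supply the two mechanisms that actually make the bound go through.
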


\begin{proof}
  The data structure is roughly the same as the tree $T$ from
  Theorem~\ref{thm:main}, except that now we need a few extra
  techniques to avoid paying for more than a constant number of
  pointers.  The first idea is to use a generalized wavelet tree $T'$
  with fan out $O(\lg^{\varepsilon_1} \sigma)$ over the universe $[1
  ..\sigma]$, instead of the original weight balanced B-tree $T$,
  (c.f., Ferragina et al.~\cite{FMMN04}).  The tree $T'$ has height
  $\Theta(\lg \sigma / \lg \lg \sigma)$, and, as in the tree $T$, we
  store dynamic strings $Y_\ell$ at each level in $T'$, as well as
  ranking trees for each node of $T'$.  The reason we cannot afford to
  increase the branching factor beyond $\Theta(\lg^{\varepsilon_1}
  \sigma)$ is that the cost of a search in the ranking tree during a
  \emph{bad case} (see the description of the \emph{faster method}
  from Section~\ref{sec:answering}) would dominate the query cost: if
  the fan out of $T'$ is $f'$, then the total time in bad cases can be
  $O((f' \lg f' \lg n) / \lg \lg n)$.

  Another issue is that the pointers to the ranking trees stored in
  each node of $T'$ occupy too much space: the lowest level in $T'$ in
  which we store ranking trees has $O(n (\lg \lg n)^2/ (\lg^{2}n
  \lg^{\varepsilon_1}\sigma))$ nodes, and therefore $O(nw (\lg \lg
  n)^2/( \lg^{2}n \lg^{\varepsilon_1}\sigma))$ bits are required for
  pointers to the roots of the ranking trees.  However, since the
  maximum number of bits occupied by ranking trees at any level is at
  most $O\left( n(\lg \lg n)^2 \lg^{\varepsilon_1} \sigma / \lg n
  \right)$, which is $o(n)$ --- not counting the pointers to their
  roots --- we can use the same technique as in
  Lemma~\ref{lem:space-2} and group all the ranking trees at a given
  level in $T'$ into a fixed memory area of size $o(n)$.  Thus, we can
  replace all of the $O(w)$ bit pointers to the roots of the ranking
  trees with $O(\lg n)$ bit pointers, and the space for pointers to
  the ranking trees becomes $o(n) + O(w \lg \sigma / \lg \lg \sigma)$:
  one pointer to the fixed memory area per level in the tree.  We also
  need $O(w \lg \sigma / \lg \lg \sigma)$ bits to store the pointers
  to the dynamic strings at each level.  At this point, we can use the
  technique of M{\"a}kinen and Navarro~\cite{MN06} to further reduce
  this to $O(w)$ bits.

  Finally, since we are working solely with ranks rather than
  $x$-coordinates, we can discard the red-black tree $S_x$.  To
  analyze the total space cost we sum the space required by the
  dynamic strings and the ranking trees at each level in $T$.  Thus,
  the total number of bits required is:
$$\sum_{\ell = 1}^{h_1} \left(n H_0(Y_\ell) + O\left( \frac{n \lg \lg \sigma (\lg \lg n)^2}{\sqrt{\lg n}} + \frac{n (\lg \lg n)^2 \lg^{\varepsilon_1}\sigma}{\lg n}\right) \right) + O(w)$$
\noindent
By the same arguments presented in~\cite{FMMN04}, this simplifies to:
$$n H_0(A) + O\left(\frac{n \lg \sigma (\lg \lg n)^2}{\sqrt{\lg n}}\right) + O(w)$$
\end{proof}

\section{Concluding Remarks}
In the same manner as Brodal et al.~\cite{BGJS10}, the data structure
we presented can also support orthogonal range counting queries in the
same time bound as range selection queries.  We note that the
cell-probe lower bounds for the static range median and static
orthogonal range counting match~\cite{P07,JL11}, and --- very recently
--- dynamic weighted orthogonal range counting was shown to have a
cell-probe lower bound of $\Omega((\lg n / \lg \lg n)^2)$ query time
for any data structure with polylogarithmic update time~\cite{L11}.
In light of these bounds, it is likely that $O((\lg n / \lg \lg n)^2)$
time for range median queries is optimal for linear space data
structures with polylogarithmic update time.  However, it may be
possible to do better in the case of dynamic range selection, when $k
= o(n^\varepsilon)$, for any $\varepsilon > 0$, using an adaptive data
structure as in the static case~\cite{JL11}.

\paragraph{Acknowledgements:} We would like to thank Gelin Zhou for
pointing out an error in the preliminary version, in the proof of
Theorem 2.

\bibliographystyle{splncs03} \bibliography{median}

\end{document}